\newcommand{\ignore}[1]{}
\newcommand{\abs}[1]{\left\vert#1\right\vert}
\DeclareMathOperator{\bff}{\bf f}
\DeclareMathOperator{\bg}{\bf g}
\DeclareMathOperator{\bs}{\bf s}
\DeclareMathOperator{\bu}{\bf u}
\DeclareMathOperator{\bv}{\bf v}
\DeclareMathOperator{\bF}{\bf F}
\DeclareMathOperator{\bH}{\bf H}
\DeclareMathOperator{\bI}{\bf I}
\DeclareMathOperator{\trace}{\mathsf{Tr}}
\DeclareMathOperator{\complexs}{\mathbb{C}}
\DeclareMathOperator{\expec}{\mathcal{E}}
\newcommand{\ind}[1]{{(#1)}}
\newcommand{\hspp}{\hspace{0.05in} }
\newcommand{\hsppp}{\hspace{0.02in} }
\newtheorem{lem}{Lemma}
\newtheorem{prop}{Proposition}
\newtheorem{thm}{Theorem}
\newsavebox{\savepar}
\begin{document}
\title{Channel Reconstruction-Based Hybrid Precoding for Millimeter Wave Multi-User MIMO
Systems}
\author{\large Miguel R.\ Castellanos$^{\dagger}$, Vasanthan Raghavan$^{\star}$,
Jung H.\ Ryu$^{\star}$, \\ Ozge H. Koymen$^{\star}$, Junyi Li$^{\star}$, David J.\ Love$^{\dagger}$, and Borja Peleato$^{\dagger}$\\
$^{\dagger}$Purdue University, West Lafayette, IN 47907, USA \\
$^{\star}$Qualcomm Corporate R\&D, Bridgewater, NJ 08807, USA
\thanks{This material is based upon work supported in part by the National Science
Foundation under grants CCF1403458 and CNS1642982.}
}

\maketitle

\begin{abstract}
\noindent
The focus of this paper is on multi-user multi-input multi-output (MIMO)
transmissions for millimeter wave systems with a hybrid precoding architecture
at the base-station. To enable multi-user transmissions, the base-station uses
a cell-specific codebook of beamforming vectors over an initial {\em beam
alignment} phase. Each user uses a user-specific codebook of beamforming vectors
to learn the top-$P$ (where $P \geq 1$) beam pairs in terms of the observed
signal-to-noise ratio (${\sf SNR}$) in a single-user setting. The top-$P$ beam
indices along with their ${\sf SNR}$s are fed back from each user and the base-station
leverages this information to generate beam weights for simultaneous transmissions.
A typical method to generate the beam weights is to use {\em only} the best beam for
each user and either steer energy along this beam, or to utilize this information to
reduce multi-user interference. The other beams are used as fall back options to
address blockage or mobility. Such an approach completely discards information learned
about the channel condition(s) even though each user feeds back this information. With
this background, this work develops an advanced {\em directional} precoding structure for
simultaneous transmissions at the cost of an additional marginal feedback overhead. This
construction relies on three main innovations: 1) Additional feedback to allow the
base-station to {\em reconstruct} a rank-$P$ approximation of the channel matrix between
it and each user, 2) A zeroforcing structure that leverages this information to combat
multi-user interference by remaining agnostic of the receiver beam knowledge in the 
precoder design, and 3) A hybrid precoding architecture that allows {\em both}
amplitude and phase control at low-complexity and cost to allow the implementation of the
zeroforcing structure. Numerical studies show that the proposed scheme results in
a significant sum rate performance improvement over na{\"i}ve schemes even with a coarse
initial beam alignment codebook.
\end{abstract}

\begin{keywords}
\noindent Millimeter wave, multi-input multi-output, multi-user, beamforming, hybrid precoding,
phase and amplitude control, zeroforcing, generalized eigenvector, channel estimation
\end{keywords}

\section{Introduction}
\label{sec1}
Over the last few years, there has been a growing interest in leveraging the opening up
of the spectrum in the millimeter wave band ($\sim {\hspace{-0.03in}} 30$-$100$ GHz) in
realizing the emerging higher data rate demands of cellular
systems~\cite{khan,qualcomm,rappaport,boccardi1}. Communications in the millimeter wave
band suffers from increased path loss exponents, higher shadow fading, blockage and
penetration losses, etc., than sub-$6$ GHz systems leading to a poorer link margin than
legacy
systems~\cite{5G_whitepaper,3gpp_CM_rel14_38901,azar,shu_sun2016,vasanth_tap2018,vasanth_tap_blockage}.
However, by restricting attention to small cell coverage and by reaping the increased
array gains from the use of large antenna arrays at both the base-station and user ends,
significant rate improvements can be realized in practice.

Millimeter wave propagation is spatially {\em sparse} with few dominant clusters in
the channel relative to the number of
antennas~\cite{5G_whitepaper,3gpp_CM_rel14_38901,vasanth_it2,qualcomm2}.
Spatial sparsity of the channel along with the use of large antenna arrays motivates
a subset of physical layer beamforming schemes based on
{\em directional} transmissions for signaling. In this context, there have been a
number of studies on the design and performance analysis of directional
beamforming/precoding structures for single-user multi-input multi-output (MIMO) systems~\cite{rusek,hur,roh,brady_tcom,oelayach,vasanth_gcom15,raghavan_jstsp,rangan,ghosh,sun}.
These works~\cite{brady_tcom,oelayach,vasanth_gcom15,raghavan_jstsp} show that
directional schemes are not only good from an implementation standpoint, but are
also robust to phase changes across clusters and allow a smooth tradeoff between
peak beamforming gain and initial user discovery latency. There has also been
progress in generalizing such directional constructions for multi-user MIMO transmissions~\cite{sun,vasanth_gcom16,vasanth_jsac2017,ang_li}.

In this context, while legacy systems use as many radio frequency (RF)
chains\footnote{An RF chain includes (but is not limited to) analog-to-digital
converters (ADCs), digital-to-analog converters (DACs), mixers, low-noise and
power amplifiers (PAs), etc.} as the number of antennas, their higher cost,
energy consumption, area and weight at millimeter wave carrier frequencies has
resulted in the popularity of {\em hybrid beamforming}
systems~\cite{molisch1,molisch2,venkateswaran,adhikary}. A hybrid beamforming
system uses a smaller number of RF chains than the number of antennas, with the
one extreme case of a single RF chain being called the analog/RF beamforming
system and the other extreme of as many RF chains as the number of antennas
being called the digital beamforming system. Spatial sparsity of millimeter
wave channels ensures that having as many RF chains as the number of dominant
clusters in the channel is sufficient to reap the full array gain possible over
these channels.

A number of recent works have addressed hybrid beamforming for millimeter wave systems.
The problem of finding the optimal precoder and combiner with a hybrid architecture
is posed as a sparse reconstruction problem in~\cite{oelayach}, leading to algorithms
and solutions based on basis pursuit methods. While the solutions achieve good performance
in certain cases, to address the performance gap between the solution proposed
in~\cite{oelayach} and the unconstrained beamformer structure, an iterative scheme is
proposed in~\cite{alkhateeb,alkhateeb2} relying on a hierarchical training codebook for
adaptive estimation of millimeter wave channels. The authors in~\cite{alkhateeb,alkhateeb2}
show that a few iterations of the scheme are sufficient to achieve near-optimal performance.
In~\cite{sohrabi}, it is established that a hybrid architecture can approach the
performance of a digital architecture as long as the number of RF chains is twice that
of the data-streams. A heuristic algorithm with good performance is developed when this
condition is not satisfied. A number of other works such
as~\cite{noh16fdbk,bogale,gao_hybrid,magueta_hybrid} have also explored iterative/algorithmic
solutions for hybrid beamforming.

A common theme that underlies most of these works is the assumption of {\em phase-only}
control in the RF/analog domain for the hybrid beamforming architecture. This
assumption makes sense at the user end with a smaller number of antennas (relative to the
base-station end), where operating the PAs below their peak rating across RF chains can lead to
a substantially poor uplink performance. On the other hand, amplitude control (denoted as
{\em amplitude tapering} in the antenna theory literature) is necessary at the base-station end
with a large number of antennas for side-lobe management and mitigating out-of-band emissions.
Further, given that the base-station is a network resource, simultaneous amplitude and phase
control of the individual antennas across RF chains is feasible at millimeter wave base-stations
at a low-complexity\footnote{Any calibration complexity can be seen as a one-time effort 
at the unit level for a large array and defrayed as a low network cost.} and 
cost~\cite[pp.\ 285-289]{harish},~\cite{huang2015low,briqech2017low}.
In particular, the millimeter
wave experimental prototype demonstrated in~\cite{vasanth_comm_mag_16} allows simultaneous
amplitude and phase control. Thus, it is important to consider a hybrid architecture with
these constraints. Further, given the directional nature of the channel, a solution
should both inherit a directional structure and provide an intuitive description of
the beam weights. For example, a {\em black box}-type algorithmic solution that does
not provide an intuitive description of the beam weights is less preferable over a
solution that is constructed out of measurement reports obtained over an initial
{\em beam alignment} phase with a directional structure for the sounding beams.

\noindent {\bf \em \underline{Main Contributions:}}
With this backdrop, this work addresses these two fundamental issues in hybrid beamformer design.
It is assumed that the base-station trains all the users in the cell with a cell-specific
codebook of beamforming vectors over an initial beam alignment phase. Each user makes an
estimate\footnote{In a practical implementation such as the Third Generation Partnership
Project New Radio (3GPP 5G-NR) design, $P = 4$ is typically assumed both in terms of
measurements and reporting~\cite{3gpp_CM_rel14_38912}. The received ${\sf SNR}$ is estimated
as the received power of a beamformed link (corresponding to the beam pair under consideration)
using a certain reference symbol resource. This metric is typically known as the reference
symbol received power (RSRP) of the link.} of the top-$P$ (where $P \geq 1$) beams over this
phase and reports the beam indices to be used by the base-station as well as the measured/received
signal-to-noise ratios (${\sf SNR}$s). The simplest implementation at the base-station uses
only the best beam information for beam steering or zeroforcing as
in~\cite{vasanth_gcom16,vasanth_jsac2017}, with other beams serving as fall back options.

In contrast to this approach, we propose to {\em reconstruct} or {\em estimate} a rank-$P$
approximation of the channel matrix between the base-station and the user (at the base-station
end). To realize
this reconstruction, we envision the additional feedback of the phase of the received signal
estimate of the top-$P$ beams over the beam alignment phase and the cross-correlation information
of the top-$P$ beams at the user end with the beam used for multi-user reception. With 
this novel construction, the base-station can remain agnostic of the user's top-$P$ 
beams in precoder design. In terms of overhead,
in 3GPP 5G-NR, these quantities can be fed back over the physical uplink control channel (PUCCH)
with a Type-II feedback scheme~\cite[Sec.\ 8.2.1.6.3, pp.\ 24-26]{3gpp_CM_rel14_38912}; see
Sec.~\ref{sec5b} for a detailed study that demonstrates this feedback overhead to be marginal.
Leveraging the rank-$P$ channel approximation, we propose the use of a zeroforcing structure
that is then quantized to meet the RF precoding constraints (amplitude and phase control) at
the base-station end for simultaneous transmissions.

To benchmark and compare the performance of the proposed scheme, we establish two upper bounds
for the sum rate. This is a fundamentally difficult problem given the non-convex dependence of
the sum rate on the beamforming vectors~\cite{cioffi_weighted,kobayashi,vasanth_arxiv2011}.
The first bound is based on an intuitive parsing and understanding of the zeroforcing
structure. The second bound is based on an alternating optimization of the
beamformer-combiner pair with signal-to-leakage and noise ratio
(${\sf SLNR}$)~\cite{tarighat} and signal-to-interference and noise ratio
(${\sf SINR}$) as optimization metrics. Numerical studies show that the proposed
scheme performs significantly better than a na{\"i}ve beam steering solution even for an initial
beam alignment codebook of poor resolution. Further, the proposed scheme is comparable with the
established upper bounds provided the beam alignment codebook resolution is moderate-to-good.
Thus, our work establishes the utility and efficacy of the proposed feedback techniques
as well as opens up avenues for further investigation of such approaches in
hybrid beamforming with millimeter wave systems.

\noindent {\bf \em \underline{Organization:}} This paper is organized as follows.
Sec.~\ref{sec2} develops the system setup and explains the RF precoder architectural
constraints adopted in this work. In Sec.~\ref{sec3}, we provide a background of the
initial beam alignment phase and the feedback mechanism necessary for the multi-user
beamforming envisioned in this work. Sec.~\ref{sec4} generates two upper bounds on
the sum rate to benchmark the performance of the proposed scheme. Sec.~\ref{sec5}
performs a number of numerical studies to understand the performance of the proposed
scheme relative to a na{\"i}ve beam steering solution as well as to the upper bounds
developed in Sec.~\ref{sec4}. Concluding remarks are provided in Sec.~\ref{sec6}.

\ignore{
\noindent {\bf \em \underline{Notations:}} Lower- and upper-case bold symbols are used to
denote vectors and matrices, respectively. The $i$-th entry of a vector ${\bf x}$ and the
$(i,j)$-th entry of a matrix ${\bf X}$ are denoted by ${\bf x}(i)$ and ${\bf X}(i,j)$,
respectively. The regular matrix transpose and complex conjugate Hermitian transpose
operations of a matrix are denoted by $(\cdot)^{\sf T}$ and $( \cdot)^{\dagger}$, respectively.
The inverse and principal square root of a square and a positive semi-definite matrix are denoted
as $(\cdot)^{-1}$ and $(\cdot)^{1/2}$, respectively. The trace and expectation operators
are denoted by ${\sf Tr}(\cdot)$ and $\expec[\cdot]$, respectively. The two-norm of a vector is
denoted as $\| \cdot \|$ with ${\mathbb{C}}$, ${\mathbb{R}}$ and ${\cal CN}$ standing for the set
of reals, complex and the complex normal random variable, respectively. Quantization operation of
an underlying variable with $B$ bits will commonly feature in this work and this operation is
denoted as ${\cal Q}_B(\cdot)$.
}

\noindent {\bf \em \underline{Notations:}} Lower- and upper-case bold symbols are used to
denote vectors and matrices, respectively. The $i$-th entry of a vector ${\bf x}$ and the
$(i,j)$-th entry of a matrix ${\bf X}$ are denoted by ${\bf x}(i)$ and ${\bf X}(i,j)$,
respectively. The regular matrix transpose and complex conjugate Hermitian transpose
operations of a matrix are denoted by $(\cdot)^{\sf T}$ and $( \cdot)^{\dagger}$, respectively.
The two-norm of a vector is denoted as $\| \cdot \|$ with ${\mathbb{C}}$, ${\mathbb{R}}$ and
${\cal CN}$ standing for the set of reals, complex numbers and the complex normal
random variable, respectively.

\section{System Setup}
\label{sec2}
We consider a cellular downlink scenario with a single base-station serving
$K_{\sf cell}$ potential users. The base-station and each user are assumed to
be equipped with planar arrays of dimensions $N_{\sf tx} \times N_{\sf tz}$
antennas and $N_{\sf rx} \times N_{\sf rz}$ antennas, respectively. At both
ends, the inter-antenna element spacing is $\lambda/2$ where $\lambda$ is the
wavelength of propagation. With $N_{\sf t} = N_{\sf tx} \cdot N_{\sf tz}$ and
$N_{\sf r} = N_{\sf rx} \cdot N_{\sf rz}$, the base-station and each user are
assumed to have $M_{\sf t} \leq N_{\sf t}$ and $M_{\sf r} \leq N_{\sf r}$ RF
chains, respectively.

\ignore{
\begin{figure*}[htb!]
\begin{center}
\includegraphics[height=3.2in,width=6.3in]  {figures/systemFigs.eps}
\caption{\label{fig_system_model}
System model capturing a cellular downlink scenario where a base-station
communicates with $K_{\sf cell} = 2$ users over an initial beam alignment
phase (beam patterns in dashed green), followed by multi-user beam design
(illustrated in red) based on the feedback from each user.}
\end{center}
\end{figure*}
}

For the channel ${\bf H}_k \in \complexs^{N_{\sf r} \times N_{\sf t}}$ between
the base-station and the $k$-th user (where $k = 1, \cdots, K_{\sf cell}$), we
assume an extended geometric propagation model over $L_k$
clusters/paths~\cite{saleh,3gpp_CM_rel14_38901}
\begin{eqnarray}
\label{eq:channel}
\bH_k = \sqrt{\frac{N_{\sf r} N_{\sf t} }{L_k}} \sum_{\ell = 1}^{L_k}
\alpha_{k,\ell} \bu_{k,\ell} \bv_{k,\ell}^{\dagger}.
\end{eqnarray}
In~(\ref{eq:channel}), $\alpha_{k,\ell}$, $\bu_{k,\ell}$ and $\bv_{k,\ell}$ denote
the complex gain, the array steering vector at the user end corresponding to
the angle of arrival (AoA) in azimuth/zenith, and the array steering vector at the
base-station corresponding to the angle of departure (AoD) in azimuth/zenith,
respectively. The
cluster gains are assumed to be independent and identically distributed (i.i.d.)
standard complex Gaussian random variables: $\alpha_{k,\ell} \sim \mathcal{CN}(0,1)$.
The normalization of the channel ensures that $\expec \big[
\trace(\bH_k \bH_k^\dagger ) \big] = N_{\sf r} N_{\sf t}$.

In terms of the system model, we focus on the narrowband aspects and assume that the
base-station serves $K \leq K_{\sf cell}$ users simultaneously with data along $M_{\sf t}$ RF
chains. The base-station precodes $r_m$ data-streams for the $m$-th user with
the $r_m \times 1$ symbol vector 
${\bf s}_m$ using the $M_{\sf t} \times r_m$ digital/baseband precoder
${\bf F}_{ {\sf Dig},\hsppp m}$ which is then up-converted to the carrier
frequency by the use of the $N_{\sf t} \times M_{\sf t}$ RF precoder
${\bf F}_{\sf RF}$. This results in the following system equation at the $k$-th user
\begin{eqnarray}
{\bf y}_k = \sqrt{\frac{\rho}{K}} \hsppp {\bf H}_k {\bf F}_{\sf RF} \cdot
\left[ \sum_{m = 1}^K {\bf F}_{ {\sf Dig},\hsppp m} {\bf s}_m \right] + {\bf n}_k
\label{eq_system_model}
\end{eqnarray}
where $\rho$ is the pre-precoding ${\sf SNR}$ and ${\bf n}_k \sim {\cal CN}( {\bf 0},
\bI_{ N_{\sf r}})$ is the $N_{\sf r} \times 1$ white Gaussian noise vector
added at the $k$-th user. We assume that ${\bf s}_m$ are i.i.d.\ complex Gaussian
random vectors 
with $\expec[\bs_m] = \mathbf{0}$ and $\expec[{\bf s}_m {\bf s}_m^\dagger] = \bI_{r_m}$.

At the $k$-th user, we assume that ${\bf y}_k$ is processed (down-converted) with an
$N_{\sf r} \times M_{\sf r}$ user-specific RF combiner ${\bf G}_{ {\sf RF}, \hsppp k}$
followed by a user-specific $M_{\sf r} \times r_k$ digital combiner ${\bf G}_{ {\sf Dig},
\hsppp k}$ to produce an estimate of ${\bf s}_k$ as follows 
\begin{align}
& \widehat{\bf s}_k =
{\bf G}_{ {\sf Dig}, \hsppp k} ^{\dagger} {\bf G}_{ {\sf RF}, \hsppp k}^{\dagger}
{\bf y}_k
\\
& {\hspace{0.15in}}
= 
\sqrt{\frac{\rho}{K}} \hsppp
{\bf G}_{ {\sf Dig}, \hsppp k} ^{\dagger} {\bf G}_{ {\sf RF}, \hsppp k}^{\dagger}
{\bf H}_k {\bf F}_{\sf RF} {\bf F}_{ {\sf Dig},\hsppp k} {\bf s}_k
+ \sqrt{\frac{\rho}{K}} \hsppp
{\bf G}_{ {\sf Dig}, \hsppp k} ^{\dagger} {\bf G}_{ {\sf RF}, \hsppp k}^{\dagger}
{\bf H}_k {\bf F}_{\sf RF} \sum_{m = 1, m \neq k}^K
{\bf F}_{ {\sf Dig},\hsppp m} {\bf s}_m + {\bf n}_k.
\label{eq_decoding}
\end{align}
The achievable rate ${\cal R}_k$ (in nats/s/Hz) at the $k$-th user when treating
multi-user interference as noise 
is given as
\begin{eqnarray}
{\cal R}_k & = & \log \det \left( {\bf I}_{r_k} + \frac{\rho}{K}
{\bf G}_{ {\sf Dig}, \hsppp k} ^{\dagger} {\bf G}_{ {\sf RF}, \hsppp k}^{\dagger}
{\bf H}_k {\bf F}_{\sf RF} {\bf F}_{ {\sf Dig},\hsppp k}
{\bf F}_{ {\sf Dig},\hsppp k}^{\dagger} {\bf F}_{\sf RF}^{\dagger}
{\bf H}_k^{\dagger} {\bf G}_{ {\sf RF}, \hsppp k} {\bf G}_{ {\sf Dig}, \hsppp k}
\cdot {\bf \Sigma}_{\sf intf}^{-1} \right)
\end{eqnarray}
where ${\bf \Sigma}_{\sf intf}$ denotes the interference and noise covariance matrix
\begin{eqnarray}
{\bf \Sigma}_{\sf intf} & = & {\bf I}_{r_k} +
\frac{\rho}{K}
{\bf G}_{ {\sf Dig}, \hsppp k} ^{\dagger} {\bf G}_{ {\sf RF}, \hsppp k}^{\dagger}
{\bf H}_k {\bf F}_{\sf RF} \left( \sum_{m \neq k} {\bf F}_{ {\sf Dig},\hsppp m}
{\bf F}_{ {\sf Dig},\hsppp m}^{\dagger} \right) {\bf F}_{\sf RF}^{\dagger}
{\bf H}_k^{\dagger} {\bf G}_{ {\sf RF}, \hsppp k} {\bf G}_{ {\sf Dig}, \hsppp k}.
\end{eqnarray}

The traditional use of {\em finite-rate feedback} has been to convey the index of
a precoder matrix from an appropriately-designed codebook of precoders to assist
with adaptive transmissions to improve ${\cal R}_k$~\cite{david_review,vasanth_design}.
More generally, feedback from
users can also be used to aid in scheduling, channel estimation and
advanced/non-codebook based precoder design. In this work, as we will see later
in Sec.~\ref{sec3}, we assume that each user feeds back its top beam indices,
an estimate of the received ${\sf SNR}$ and signal phase, and cross-correlation
of the top receive beams to assist with the design of
a non-codebook based multi-user precoder structure.
In terms of precoder constraints, we make the assumption that
${\bf F}_{ {\sf Dig}, \hsppp m} \in {\mathbb{C}}^{ M_{\sf t} \times r_m}$.

For the RF precoder, we assume that the amplitude and phase of each entry in ${\bf F}_{\sf RF}$
are controlled by a finite precision gain controller and phase shifter, respectively.
In other words, the amplitude and phase come from a set of $2^{B_{\sf amp}}$ and
$2^{B_{\sf phase}}$ quantization levels
\begin{eqnarray}
|{\bf F}_{\sf RF}(i,j)| \in \left\{ A_1, \cdots, A_{2^{B_{\sf amp}}} \right\},
\hspp \hspp
\angle{ {\bf F}_{\sf RF}(i,j) } \in \left\{ \phi_1, \cdots, \phi_{2^{B_{\sf phase}}} \right\},
\label{eq_hybrid_precoder_constraints}
\end{eqnarray}
where $0 \leq A_1 < A_2 < \dots < A_{2^{B_{\sf amp}}}$. Prior work on hybrid
beamforming such as~\cite{oelayach,alkhateeb,alkhateeb2,sohrabi} etc., assume
that the RF precoder can only be controlled by a phase shifter. However, such
constraining assumptions are not reflective of practical
implementations~\cite{huang2015low,briqech2017low,vasanth_comm_mag_16}, where an
independent gain controller can be used in every RF chain for every antenna.
With these structural constraints on the precoder, the transmit power constraint
is captured by 
\begin{eqnarray}
\sum_{m = 1}^K {\sf Tr} \left( {\bf F}_{ {\sf Dig}, \hsppp m}^{\dagger}
{\bf F}_{\sf RF}^{\dagger} {\bf F}_{\sf RF} {\bf F}_{ {\sf Dig}, \hsppp m} \right) \leq K.
\end{eqnarray}

We are interested in the design of RF and digital precoders with the sum rate,
${\cal R}_{\sf sum} \triangleq \sum_{k = 1}^K {\cal R}_k$, being the metric to maximize.
In general, we only need the constraints $\sum_{k = 1}^{K} r_k \leq M_{\sf t} \leq N_{\sf t}$
and $\max _{k} r_k \leq M_{\sf r} \leq N_{\sf r}$.
However, the considered sum rate optimization with such an assumption is quite
complicated. To overcome this complexity, we 
consider a simple use-case 
in this work.

\section{Multi-User Beamformer Design}
\label{sec3}
We are interested in the practically-motivated setting where each user is equipped
with only one RF chain and the base-station transmits one data-stream to each user that is simultaneously
scheduled. In this scenario, $M_{\sf r} = r_k = 1$ (for all $k = 1, \cdots, K$)
and $M_{\sf t} = K \leq N_{\sf t}$. The system decoding model in~(\ref{eq_system_model})
and~(\ref{eq_decoding}) reduce to
\begin{eqnarray}
\widehat{\bf s}_k =
{\bf G}_{ {\sf Dig}, \hsppp k} ^{\dagger} {\bf G}_{ {\sf RF}, \hsppp k}^{\dagger} {\bf y}_k
& = &
\underbrace{ {\bf G}_{ {\sf Dig}, \hsppp k} ^{\dagger} }_{1 \times 1}
\underbrace{ {\bf G}_{ {\sf RF}, \hsppp k}^{\dagger} }_{1 \times N_{\sf r}}
\cdot \left(
\sqrt{ \frac{\rho}{K}} \hsppp {\bf H}_k \hsppp
\underbrace{ {\bf F}_{\sf RF} }_{ N_{\sf t} \times K}
\cdot \underbrace{ {\bf F}_{\sf Dig} }_{K \times K}
\cdot \underbrace{ {\bf s} }_{K \times 1}  + {\bf n}_k \right) \\
& = & \sqrt{\frac{\rho}{K}} \cdot
{\bf g}_k^{\dagger} \hsppp {\bf H}_k \hsppp \left[ {\bf f}_1 {\bf s}_1 ,
\cdots, {\bf f}_K {\bf s}_K \right] + {\bf g}_k^{\dagger} \hsppp {\bf n}_k
\end{eqnarray}
where ${\bf F}_{\sf Dig} = \left[ {\bf F}_{{\sf Dig}, 1}, \cdots,
{\bf F}_{{\sf Dig},K} \right]$ and ${\bf s} = \left[ {\bf s}_1 , \cdots,
{\bf s}_K \right]^{\sf T}$, 
and the second equation follows assuming\footnote{A simple realization of the
hybrid precoding architecture is achieved by setting ${\bf F}_{\sf Dig} = {\bf I}_K$
and the desired ${\bf f}_k$ for the $k$-th user is set as the $k$-th column of
${\bf F}_{\sf RF}$. The desired ${\bf f}_k$ is such that ${\bf f}_k^{\dagger}
{\bf f}_k \leq 1$ and meets the quantization constraints
in~(\ref{eq_hybrid_precoder_constraints}). In a practical implementation,
${\bf F}_{\sf Dig}$ could be primarily used for sub-band precoding and in the
narrowband context of this work, ${\bf F}_{\sf Dig} = {\bf I}_K$ would reflect
such an implementation-driven model.}
${\bf f}_k = {\bf F}_{\sf RF}
{\bf F}_{ {\sf Dig}, k}$ and ${\bf G}_{ {\sf RF}, \hsppp k} = {\bf g}_k$.
The power constraint 
is equivalent to
$\sum_{m = 1}^K  {\bf f}_k^{\dagger} {\bf f}_k \leq K$
and ${\cal R}_k$ reduces to
\begin{eqnarray}
{\cal R}_k = \log \left( 1 + \frac{ \frac{\rho}{K} \cdot |{\bf g}_k^{\dagger}
{\bf H}_k {\bf f}_k|^2 } {1 + \frac{\rho}{K} \cdot \sum_{m \neq k}
|{\bf g}_k^{\dagger} {\bf H}_k {\bf f}_m|^2 } \right).
\end{eqnarray}

The focus of this section is to first develop an advanced feedback
mechanism and a systematic design of the multi-user beamforming
structure based on a directional representation of the channel. This
structure allows the base-station to combat multi-user interference in simultaneous
transmissions. 

\subsection{Initial Beam Alignment}
\label{sec3a}
Enabling multi-user transmissions in practice is critically dependent on an
initial beam acquisition process (commonly known as the {\em beam alignment}
phase). In a practical implementation such as 3GPP 5G-NR, beam alignment
corresponds to a beam sweep over a block of secondary synchronization (SS)
signals transmitted over multiple ports/RF chains. The use of multiple
directional beams over multiple ports results in a composite beam pattern
at the base-station end (as seen from the user side). The composite pattern
can lead to uncertainty in the direction of the strongest path between the
base-station and the user. This directional ambiguity is subsequently resolved
with a beam refinement over the individual constituent beams that make the
composite beam on separate resource elements. Beam refinement allows identification
and ambiguity resolution of the constituent beams.

Such a ``post directional ambiguity resolved'' beam alignment process is
modeled by assuming that the base-station is equipped with an $N$ element
codebook ${\cal F}_{\sf tr}$
\begin{eqnarray}
\label{eq:bs_codebook_tx}
{\cal F}_{\sf tr} =
\Big\{ \bff_{{\sf tr},1}, \, \dots, \, \bff_{{\sf tr},N} \Big\},
\end{eqnarray}
and the $k$-th user is equipped with an $M$ element user-specific codebook
${\cal G}_{\sf tr}^k$
\begin{eqnarray}
\label{eq:bs_codebook_rx}
{\cal G}_{\sf tr}^k =
\left\{ \bg^\ind{k}_{{\sf tr},1}, \, \dots, \, \bg^\ind{k}_{{\sf tr},M} \right \}.
\end{eqnarray}
A typical design methodology for ${\cal F}_{\sf tr}$ is a hierarchical design with
different sets of beams that trade-off peak array gain at the cost of initial beam
acquisition latency. For example, at least from the 3GPP 5G-NR perspective, the
designs of ${\cal F}_{\sf tr}$ and ${\cal G}_{\sf tr}^k$ are intended to be
implementation-specific at the base-station and user ends, respectively.
Nevertheless, overarching design guidelines for beam broadening are provided
in~\cite{hur,raghavan_jstsp,song17codebook,noh17adaptive}. In particular, a broadened
beam can be generated by an optimal co-phasing of a number of array steering vectors
in appropriately chosen directions. Both the number of such vectors as well as their
steering directions can be optimized to produce a broadened beam. It must also be
pointed out that most of the beam broadening works have some variations in terms
of design principles and these variations themselves do not affect the flavor of
results reported in this paper.

In the beam alignment phase, the top-$P$ beam indices at the base-station and each
user that maximize an estimate of the received ${\sf SNR}$ are learned. In particular,
the received ${\sf SNR}$
corresponding to the $(m, n)$-th beam index pair at the $k$-th user is given as
\begin{eqnarray}
{\sf SNR}^\ind{k}_{\sf rx}(m,n) =
\abs{\left(\bg^\ind{k}_{{\sf tr},m}\right)^\dagger \bH_k \bff_{{\sf tr},n}}^2.
\end{eqnarray}
Let the beam pair indices at the $k$-th user be arranged in non-increasing
order of the received ${\sf SNR}$ and let the top-$P$ beam pair indices be
denoted as
\begin{eqnarray}
{\cal M} =
\Big\{ \left(m_1^k, \, n_1^k\right), \hspp \cdots, \hspp
\left(m_P^k, \, n_P^k \right) \Big\}.
\end{eqnarray}
With the simplified notation of
\begin{eqnarray}
{\sf SNR}^\ind{k}_{ {\sf rx}, \hsppp \ell} \triangleq
{\sf SNR}^\ind{k}_{\sf rx}(m_{\ell}^{k},n_{\ell}^{k}), \hspp \ell = 1, \cdots, P,
\end{eqnarray}
we have ${\sf SNR}^\ind{k}_{ {\sf rx}, \hsppp 1} \geq \cdots \geq
{\sf SNR}^\ind{k}_{ {\sf rx}, \hsppp P}$.
With the initial beam alignment methodology as described above, 
we now leverage the top-$P$ beam information learned at the $k$-th user
to estimate the channel matrix ${\bf H}_k$ and to design ${\bf F}_{\sf RF}$ at
the base-station end.

\subsection{Channel Reconstruction and Beamformer Design}
\label{sec3b}
A typical use of the feedback information at the base-station is to select the
top/best beam indices for all the users and to leverage this information to construct
a multi-user transmission scheme. Such an approach is adopted in~\cite{vasanth_jsac2017},
where multi-user beam designs leveraging only the top beam pair index,
$\left(m_1^k, \, n_1^k\right)$, and intended to serve different objectives are proposed:
i) greedily (from each user's perspective) steering a beam to the best direction for
that user (called the {\em beam steering} scheme), ii) using the information collated
from different users to combat interference to other simultaneously scheduled users via
a zeroforcing solution (called the {\em zeroforcing} scheme), and iii) for leveraging
both the beam steering and interference management objectives via a generalized
eigenvector optimization (called the {\em generalized eigenvector} scheme). If the beam
pair $\left(m_1^k, \, n_1^k\right)$ is blocked or fades, the $k$-th user requests the
base-station to switch to the beam index $n_2^k$ and it switches to the beam with index
$m_2^k$ (and so on)~\cite{vasanth_tap_blockage}.

In this work, we propose to generalize the structures in~\cite{vasanth_jsac2017} by
leveraging {\em all} the top-$P$ beam pair indices fed back from each user. 
In this direction, the base-station intends to {\em reconstruct} or {\em estimate} a rank-$P$
approximation of (a scaled version of) the channel matrix ${\bf H}_k$
corresponding to the $k$-th user as follows
\begin{eqnarray}
\widehat{\bf H}_k = \sum_{\ell = 1}^P \widehat{\alpha}_{k, \ell} \hsppp
\widehat{ \bu}_{k,\ell} \hsppp \widehat{\bv}_{k,\ell}^{\dagger},
\label{eq_channel_reconstruction}
\end{eqnarray}
where $\widehat{ \bu}_{k,\ell}$ and $\widehat{ \bv}_{k,\ell}$ are defined as
estimates of the array steering vectors $\bu_{k, \ell}$ and $\bv_{k,\ell}$, respectively.
Given the channel model structure in~(\ref{eq:channel}),~(\ref{eq_channel_reconstruction})
is simplified by estimating ${\bf v}_{k, \hsppp \ell}$ and $|\alpha_{k, \hsppp \ell}|$ by
${\bf f}_{ {\sf tr}, n_{\ell}^k}$ and $\gamma_{k, \ell}$, respectively, where
\begin{eqnarray}
\gamma_{k, \ell} \triangleq
\sqrt{ {\cal Q}_{ B_{\sf SNR} } \left(  {\sf SNR}^\ind{k}_{ {\sf rx}, \hsppp \ell}  \right) }
\end{eqnarray}
for some choice of $B_{\sf SNR}$. In the above description, ${\cal Q}_B(\cdot)$ denotes
an appropriately-defined $B$-bit quantization operation\footnote{A $B$-bit quantization
operation is precisely specified if $2^B$ disjoint intervals that exactly and entirely
span the range
of the quantity and a representative/quantized value from each interval are specified.}
of the quantity under consideration.
However, estimating $\widehat{\bf H}_k$ as in~(\ref{eq_channel_reconstruction}) is not
complete until we have an estimate for $\angle{ \alpha_{k,\ell} }$ and
${\bf u}_{k, \ell}$. The quantity $\angle{ \alpha_{k, \ell}} $ can be estimated
by the user with the same reference symbol resource (or pilot symbol) transmitted
during the beam training phase with no additional training overhead. Therefore,
we define $\varphi_{k,\ell}$ as the
$B_{\sf est, \hsppp phase}$-bit quantization of the phase of an estimate
$\widehat{\bf s}_{{\sf tr}, k,\ell}$ of the pilot symbol
${\bf s}_{{\sf tr}, k,\ell}$ 
\begin{eqnarray}
\varphi_{k,\ell} \triangleq {\cal Q}_{B_{\sf est, \hsppp phase}}
\left( \angle{ \widehat{\bf s}_{{\sf tr}, k,\ell} } \right), \hspp \hspp
{\sf where} \hspp \hspp
\widehat{\bf s}_{{\sf tr}, k,\ell} = \left( {\bf g}^{(k)}_{{\sf tr}, \hsppp m_{\ell}^k } \right)^{\dagger}
\left[ \sqrt{\rho} \hsppp {\bf H}_k {\bf f}_{{\sf tr}, \hsppp n_{\ell}^k }
{\bf s}_{{\sf tr}, k,\ell}
+ 
{\bf n}_{k,\ell} \right]
\end{eqnarray}
for some choice of $B_{\sf est, \hsppp phase}$. The noise term 
${\bf n}_{k,\ell}$ captures the additive noise in the initial beam alignment
process corresponding to the top-$P$ beam pairs.

For ${\bf u}_{k, \ell}$, 
we note that the base-station not only needs the beam indices
$\{ m_{\ell}^k \}$ that are useful for the user side, but also the useful part of the
user's codebook (${\cal G}_{\sf tr}^k$) since the base-station is typically unaware
of it. 
To avoid this unnecessary complexity and feedback given the proprietary structure of
${\cal G}_{\sf tr}^k$, we assume that the $k$-th user uses a multi-user reception beam
${\bf g}_k$. In the simplest manifestation, ${\bf g}_k$ could be the best training
beam learned in the beam alignment phase, $\bg_{{\sf tr}, m_1^{k}}^{(k)}$.
However, a more sophisticated choice for ${\bf g}_k$ is not precluded. For example, an
iterative choice that maximizes the ${\sf SINR}$ (instead of the ${\sf SNR}$) could
be considered for ${\bf g}_k$.

We then note that the estimated ${\sf SINR}$, defined as,
\begin{eqnarray}
\widehat{\sf SINR}_k \triangleq
\frac{ \frac{\rho}{K} \cdot | {\bf g}_k^{\dagger} \widehat{\bf H}_k {\bf f}_k |^2 }
{ 1 + \frac{ \rho}{K} \cdot \sum_{m \neq k} | {\bf g}_k^{\dagger} \widehat{\bf H}_k {\bf f}_m |^2 }
\label{eq_estimated_SINR}
\end{eqnarray}
is only dependent on $\widehat{\bf H}_k$ in the form of ${\bf g}_k^{\dagger} \widehat{\bf H}_k$.
Building on this fact, each user 
generates $\{ \beta_{k, \hsppp \ell} \}$, defined as,
\begin{eqnarray}
\beta_{k, \hsppp \ell} \triangleq
{\bf g}_k^{\dagger} \widehat{\bu}_{k,\ell} \hsppp \hspp
{\sf where} \hspp \hsppp
\widehat{\bu}_{k,\ell} = {\bf g}_{ {\sf tr}, \hsppp
m_{\ell}^k}^{(k)}.
\end{eqnarray}
It then quantizes the amplitude and phase of $\beta_{k, \ell}$ for some choice
of $B_{\sf corr, \hsppp amp}$ and $B_{\sf corr, \hsppp phase}$ and feeds them back
\begin{eqnarray}
\mu_{k, \ell} \triangleq {\cal Q}_{ B_{\sf corr,\hsppp amp}} \left( |\beta_{k,\ell}| \right),
\hspp \hspp
\nu_{k,\ell} \triangleq {\cal Q}_{ B_{\sf corr,\hsppp phase}} \left( \angle{ \beta_{k,\ell} }
\right).
\end{eqnarray}
For both $\varphi_{k,\ell}$ and $\nu_{k,\ell}$, without loss in generality, relative
phases with respect to $\varphi_{k,1}$ and $\nu_{k,1}$ (that is,
$\varphi_{k,\ell} - \varphi_{k,1}$ and $\nu_{k,\ell} - \nu_{k,1}$) can be reported.

The mappings between the quantities of interest and the approximated quantities as
well as the feedback overhead needed from each user to implement the proposed scheme
are described in Table~\ref{table_feedback}.
While the feedback overhead increases
linearly with $P$ (the rank of the channel approximation), there are diminishing
returns in terms of channel representation accuracy since the clusters captured
in $\widehat{\bf H}_k$ are sub-dominant as $P$ increases (and are eventually limited by
$L_k$). Thus, it is useful to select $P$ to trade-off these two conflicting objectives.

\begin{table*}[htb!]
\caption{Mappings between quantities describing ${\bf H}_k$ and
the approximated quantities, and their feedback overhead.}
\label{table_feedback}
\begin{center}
\begin{tabular}{|c||c||c|}
\hline
Quantity of Interest & Approximated Quantity & Feedback Overhead \\
\hline
\hline
Array steering vector at base-station end (${\bf v}_{k, \ell}$) &
Base-station beam indices ($n_{\ell}^k$) &
$P \cdot \log_2(N)$ \\ \hline
Gain of cluster coefficient ($|\alpha_{k, \ell}|$) &
Received ${\sf SNR}$ in beam alignment (${\sf SNR}_{ {\sf rx}, \hsppp \ell }^{(k)}$) &
$P \cdot B_{\sf SNR}$ \\ \hline
Phase of cluster coefficient ($\angle{\alpha_{k,\ell}}$) &
Estimated phase in beam alignment ($\angle{{\widehat{\bf s}}_{{\sf tr} , k, \ell} }$)
& 
$(P-1) \cdot B_{\sf est, \hsppp phase}$ \\ \hline
Array steering vector at user end (${\bf u}_{k, \ell}$) &
Amplitude of codebook correlation 
($|\beta_{k, \ell}|$) &
$P \cdot B_{\sf corr, \hsppp amp}$
\\
& Phase of codebook correlation 
($\angle{\beta_{k,\ell}}$) &
$(P-1) \cdot B_{\sf corr, \hsppp phase}$
\\ \hline
\end{tabular}
\end{center}
\end{table*}

Following the above discussion, the $k$-th user feeds back the $P \times 5$ matrix
${\bf P}_k$, defined as
\begin{eqnarray}
{\bf P}_k \triangleq \left[
\begin{array}{ccccc}
n_{1}^k & \gamma_{k, 1} & 0 & \mu_{k,1} & 0 \\
n_{2}^k & \gamma_{k, 2} & \varphi_{k, 2} - \varphi_{k, 1} & \mu_{k,2}
& \nu_{k, 2} - \nu_{k, 1} \\
\vdots & \vdots & \vdots & \vdots & \vdots \\
n_{P}^k & \gamma_{k, P} & \varphi_{k, P} - \varphi_{k,1} & \mu_{k, P} & \nu_{k, P} - \nu_{k,1}
\end{array}
\right],
\end{eqnarray}
and the base-station approximates ${\bf g}_k^{\dagger} \widehat{\bf H}_k$ 
as follows
\begin{eqnarray}
{\bf g}_k^{\dagger} \widehat{\bf H}_k = \sum_{\ell = 1}^P
\mu_{k,\ell} \hsppp \gamma_{k,\ell} \cdot
e^{j (\varphi_{k, \ell} + \nu_{k,\ell} )}
\cdot \left( {\bf f}_{ {\sf tr}, n_{\ell}^k} \right)^{\dagger}.
\label{eq_eigenspace_reconst}
\end{eqnarray}
In other words, ${\bf g}_k^{\dagger} \widehat{\bf H}_k$ is represented as a linear
combination of the top-$P$ beams as estimated from ${\cal F}_{\sf tr}$ in the initial
beam alignment phase. The weights in this linear combination correspond to the relative
strengths of the clusters as distinguished by the codebook resolution (at both ends).

The base-station uses the channel matrix constructed for each user based on its feedback
information (${\bf g}_k^{\dagger} \widehat{\bf H}_k$) and generates a good
beamformer structure, illustrated in the next result, for use in multi-user
transmissions.
\begin{prop}
\label{prop_zf}
The zeroforcing beamformer structure is one where for every user that is simultaneously
scheduled, the beam ${\bf f}_k$ nulls the multi-user interference in
$\widehat{\sf SINR}_m, \hspp m \neq k$ with $\widehat{\sf SINR}_m$ as given
in~(\ref{eq_estimated_SINR}). The beams $\{ {\bf f}_m \}$ in the zeroforcing
structure are the unit-norm column vectors of the $N_{\sf t} \times K$ matrix
${\cal H}^{\dagger} \left( {\cal H} {\cal H}^{\dagger} \right)^{-1}$, where
${\cal H}$ is the $K \times N_{\sf t}$ matrix given as
\begin{eqnarray}
{\cal H} = \left[
\begin{array}{c}
{\bf g}_1^{\dagger} \widehat{\bf H}_1 \\
{\bf g}_2^{\dagger} \widehat{\bf H}_2 \\
\vdots \\
{\bf g}_K^{\dagger} \widehat{\bf H}_K
\end{array}
\right] =
\left[
\begin{array}{c}
\sum_{\ell = 1}^P \mu_{1,\ell} \hsppp \gamma_{1,\ell} \cdot
e^{j (\varphi_{1, \ell} + \nu_{1,\ell} )}
\cdot \left( {\bf f}_{ {\sf tr}, n_{\ell}^1} \right)^{\dagger}
\\
\sum_{\ell = 1}^P \mu_{2,\ell} \hsppp \gamma_{2,\ell} \cdot
e^{j (\varphi_{2, \ell} + \nu_{2,\ell} )}
\cdot \left( {\bf f}_{ {\sf tr}, n_{\ell}^2} \right)^{\dagger}
\\
\vdots \\
\sum_{\ell = 1}^P \mu_{K,\ell} \hsppp \gamma_{K,\ell} \cdot
e^{j (\varphi_{K, \ell} + \nu_{K,\ell} )}
\cdot \left( {\bf f}_{ {\sf tr}, n_{\ell}^K} \right)^{\dagger}
\end{array}
\right].
\end{eqnarray}
\end{prop}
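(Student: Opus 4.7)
The plan is to recognize that the zeroforcing condition---${\bf g}_m^\dagger \widehat{\bf H}_m {\bf f}_k = 0$ for all $m \neq k$---collapses into a single matrix equation once we stack the beams into the precoder matrix ${\bf F} = [{\bf f}_1, \dots, {\bf f}_K]$. Since the $m$-th row of $\mathcal{H}$ is ${\bf g}_m^\dagger \widehat{\bf H}_m$ by construction, the $(m,k)$-entry of $\mathcal{H}{\bf F}$ is precisely the cross-user interference term appearing in the denominator of $\widehat{\sf SINR}_m$ in~(\ref{eq_estimated_SINR}). The nulling requirement is therefore equivalent to asking that $\mathcal{H}{\bf F}$ be diagonal.

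Next, I would fix the diagonal entries by imposing the canonical normalization $\mathcal{H}{\bf F} = {\bf I}_K$. Under the geometric channel model~(\ref{eq:channel}), $\mathcal{H}$ is $K \times N_{\sf t}$ with $K \leq N_{\sf t}$, and its rows---each a linear combination of up to $P$ distinct training beams from $\mathcal{F}_{\sf tr}$ weighted by the feedback entries of the corresponding user---are almost surely linearly independent. Consequently $\mathcal{H}\mathcal{H}^\dagger$ is invertible and the Moore--Penrose right inverse ${\bf F} = \mathcal{H}^\dagger(\mathcal{H}\mathcal{H}^\dagger)^{-1}$ exists; direct substitution verifies $\mathcal{H}{\bf F} = {\bf I}_K$. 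Standard linear-algebra arguments also identify this choice as the minimum Frobenius-norm solution of the nulling equation, which is why it is singled out as the canonical zeroforcing precoder.

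Finally, I would close the argument by noting that the nulling property is column-wise scale invariant: replacing each ${\bf f}_k$ by ${\bf f}_k / \|{\bf f}_k\|$ preserves ${\bf g}_m^\dagger \widehat{\bf H}_m {\bf f}_k = 0$ for every $m \neq k$, while also meeting the per-beam unit-norm constraint needed to satisfy the power budget $\sum_k {\bf f}_k^\dagger {\bf f}_k \leq K$. The main subtlety is justifying invertibility of $\mathcal{H}\mathcal{H}^\dagger$; I would handle this by appealing to the genericity of full row rank under the random cluster gains $\alpha_{k,\ell} \sim \mathcal{CN}(0,1)$ and continuous AoD/AoA distributions, and would observe in passing that the degenerate rank-deficient case can be treated identically after replacing $(\mathcal{H}\mathcal{H}^\dagger)^{-1}$ by a pseudo-inverse acting on the effective row space of $\mathcal{H}$.
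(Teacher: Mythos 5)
Your proposal is correct and follows essentially the same route as the paper: the paper's proof simply states that the zeroforcing condition is $|{\bf g}_k^{\dagger}\widehat{\bf H}_k{\bf f}_m|^2 = 0$ for $m\neq k$ and that "an elementary computation" verifies the pseudo-inverse construction satisfies it, which is precisely the $\mathcal{H}{\bf F}={\bf I}_K$ verification you carry out. Your added remarks on the invertibility of $\mathcal{H}\mathcal{H}^{\dagger}$ and the scale invariance of the nulling under column normalization are details the paper leaves implicit, and they are welcome but do not change the argument.
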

\begin{proof}
See Appendices~\ref{app_ge} and~\ref{proof_prop_zf}.
\end{proof}

\section{Upper Bounds for ${\cal R}_{\sf sum}$}
\label{sec4}
We are interested in benchmarking the performance of the zeroforcing structure
against an upper bound on ${\cal R}_{\sf sum}$. The goal of optimizing
${\cal R}_{\sf sum}$ over $\{ {\bf f}_k, \hsppp {\bf g}_k \}$ with perfect channel
state information $\{ {\bf H}_k \}$ is a non-convex optimization
problem~\cite{cioffi_weighted,kobayashi,vasanth_arxiv2011}
that appears to be complicated. In this context, an alternate formulation based on the
signal-to-leakage and noise ratio metric~\cite{tarighat} that {\em simultaneously} maximizes the
array gain seen by the $k$-th user, $| {\bf g}_k^{\dagger} {\bf H}_k {\bf f}_k |^2$,
and minimizes the interfering array gain seen by the other users,
$| {\bf g}_m^{\dagger} {\bf H}_m {\bf f}_k |^2, \hspp m \neq k$ is relevant. Since these objectives
are in some sense conflicting and can be weighed differently, we consider the composite metric
\begin{eqnarray}
{\sf SLNR} _k \triangleq
\frac{ \eta_{k,k} \hsppp | {\bf g}_k^{\dagger} {\bf H}_k {\bf f}_k |^2 }
{ 1 + \sum_{m \neq k} \eta_{m,k} \hsppp
| {\bf g}_m^{\dagger} {\bf H}_m {\bf f}_k |^2 }
\label{eq_SLNR}
\end{eqnarray}
for an appropriate set of weighting factors $\eta_{m,k} \geq 0$ with
$m ,k \in \{ 1, \cdots, K \}$.

\subsection{Upper Bound Motivated by the Zeroforcing Structure}
\label{sec4a}
Building on Prop.~\ref{prop_zf}, we now develop an upper bound for ${\cal R}_{\sf sum}$
motivated by the zeroforcing structure. In this direction, we consider a
signal-to-leakage-type metric equivalent of~(\ref{eq_SLNR}) based on the estimated
channel matrix $\widehat{\bf H}_k$
\begin{eqnarray}
\widehat{ {\sf SLNR} }_k \triangleq
\frac{ \eta_{k,k} \hsppp | {\bf g}_k^{\dagger} \widehat{\bf H}_k {\bf f}_k |^2 }
{ 1 + \sum_{m \neq k} \eta_{m,k} \hsppp
| {\bf g}_m^{\dagger} \widehat{\bf H}_m {\bf f}_k |^2 }
\end{eqnarray}
for an appropriate set of weighting factors $\eta_{m,k} \geq 0$ with
$m ,k \in \{ 1, \cdots, K \}$.
\begin{prop}
\label{prop_ge}
Assuming that $\{ \widehat{\bf H}_m^{\dagger} {\bf g}_m \}$ and
$\{ \eta_{m,k} \}$ are known at the base-station, the choice of ${\bf f}_k$
that maximizes $\widehat{ {\sf SLNR}} _k$ is given 
by the generalized eigenvector structure
\begin{eqnarray}
{\bf f}_k = \frac{ \left( {\bf I}_{ N_{\sf t}} + \sum_{m \neq k} \eta_{m,k}
\hsppp
\widehat{\bf H}_m^{\dagger} {\bf g}_m {\bf g}_m^{\dagger} \widehat{\bf H}_m
\right)^{-1} \widehat{\bf H}_k^{\dagger} {\bf g}_k
}
{ \Big\| \left( {\bf I}_{ N_{\sf t}} + \sum_{m \neq k} \eta_{m,k} \hsppp
\widehat{\bf H}_m^{\dagger} {\bf g}_m {\bf g}_m^{\dagger}
\widehat{\bf H}_m \right)^{-1} \widehat{\bf H}_k^{\dagger} {\bf g}_k \Big\| }.
\label{eq_ge}
\end{eqnarray}
\end{prop}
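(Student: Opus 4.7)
The plan is to recognize that under a unit-norm constraint $\|\mathbf{f}_k\|=1$ (which is standard when separating beam \emph{direction} from power allocation), the objective $\widehat{\mathrm{SLNR}}_k$ is a generalized Rayleigh quotient, and then invoke the classical generalized eigenvector result. The first step is to rewrite the numerator and denominator as Hermitian quadratic forms in $\mathbf{f}_k$. Using $|\mathbf{g}_m^\dagger \widehat{\mathbf{H}}_m \mathbf{f}_k|^2 = \mathbf{f}_k^\dagger \widehat{\mathbf{H}}_m^\dagger \mathbf{g}_m \mathbf{g}_m^\dagger \widehat{\mathbf{H}}_m \mathbf{f}_k$ and writing $1 = \mathbf{f}_k^\dagger \mathbf{I}_{N_{\sf t}} \mathbf{f}_k$ (legitimate under unit-norm), we obtain
\begin{equation*}
\widehat{\mathrm{SLNR}}_k \;=\; \frac{\mathbf{f}_k^\dagger \mathbf{A}_k \mathbf{f}_k}{\mathbf{f}_k^\dagger \mathbf{B}_k \mathbf{f}_k},
\end{equation*}
with $\mathbf{A}_k = \eta_{k,k}\, \widehat{\mathbf{H}}_k^\dagger \mathbf{g}_k \mathbf{g}_k^\dagger \widehat{\mathbf{H}}_k$ (positive semi-definite, rank one) and $\mathbf{B}_k = \mathbf{I}_{N_{\sf t}} + \sum_{m \neq k} \eta_{m,k}\, \widehat{\mathbf{H}}_m^\dagger \mathbf{g}_m \mathbf{g}_m^\dagger \widehat{\mathbf{H}}_m$ (positive definite by the $\mathbf{I}_{N_{\sf t}}$ term).

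Next, I would invoke the standard fact that the maximum of a generalized Rayleigh quotient with a positive definite denominator matrix is attained by the dominant generalized eigenvector of the pencil $(\mathbf{A}_k,\mathbf{B}_k)$, equivalently by the dominant eigenvector of $\mathbf{B}_k^{-1}\mathbf{A}_k$. A clean derivation is to perform the substitution $\mathbf{f}_k = \mathbf{B}_k^{-1/2}\mathbf{z}$, reducing the problem to maximizing $\mathbf{z}^\dagger \mathbf{B}_k^{-1/2} \mathbf{A}_k \mathbf{B}_k^{-1/2} \mathbf{z} / \mathbf{z}^\dagger \mathbf{z}$, which is a standard Rayleigh quotient with solution equal to the top eigenvector of $\mathbf{B}_k^{-1/2}\mathbf{A}_k \mathbf{B}_k^{-1/2}$.

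The simplification that produces the stated closed form comes from exploiting the rank-one structure of $\mathbf{A}_k$. Writing $\mathbf{a}_k \triangleq \sqrt{\eta_{k,k}}\,\widehat{\mathbf{H}}_k^\dagger \mathbf{g}_k$, so $\mathbf{A}_k = \mathbf{a}_k\mathbf{a}_k^\dagger$, the matrix $\mathbf{B}_k^{-1/2}\mathbf{A}_k\mathbf{B}_k^{-1/2} = (\mathbf{B}_k^{-1/2}\mathbf{a}_k)(\mathbf{B}_k^{-1/2}\mathbf{a}_k)^\dagger$ is itself rank one, so its dominant eigenvector is simply $\mathbf{B}_k^{-1/2}\mathbf{a}_k$ up to scale. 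Undoing the substitution yields $\mathbf{f}_k \propto \mathbf{B}_k^{-1}\mathbf{a}_k = \mathbf{B}_k^{-1}\widehat{\mathbf{H}}_k^\dagger \mathbf{g}_k$, and after normalization this is precisely the expression in~(\ref{eq_ge}); the scalar $\sqrt{\eta_{k,k}}$ drops out on normalization.

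The main obstacle, and the only subtle point, is justifying the $1 = \mathbf{f}_k^\dagger \mathbf{I}_{N_{\sf t}} \mathbf{f}_k$ substitution: strictly, $\widehat{\mathrm{SLNR}}_k$ is not scale-invariant in $\mathbf{f}_k$ because of the noise constant in the denominator, so the Rayleigh-quotient reduction requires fixing a norm. I would address this by noting that the per-user power is allocated separately (consistent with the footnote on $\mathbf{F}_{\sf Dig}$ and the unit-norm convention used throughout Sec.~\ref{sec3}), so optimizing the \emph{direction} of $\mathbf{f}_k$ under $\|\mathbf{f}_k\|=1$ is the relevant problem; any residual scaling enters the digital precoder and the overall power constraint. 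With this convention fixed, the derivation above produces~(\ref{eq_ge}) and establishes optimality.
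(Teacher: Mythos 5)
Your proof is correct and follows essentially the same route as the paper: rewrite $\widehat{ {\sf SLNR} }_k$ as a generalized Rayleigh quotient using $\|{\bf f}_k\|=1$, invoke the generalized eigenvector lemma, and exploit the rank-one numerator to collapse the dominant eigenvector to ${\bf B}_k^{-1}\widehat{\bf H}_k^{\dagger}{\bf g}_k$ (the paper's Lemma~\ref{lem_ge_soln} states exactly this special case). Your additional remark justifying the unit-norm normalization is a point the paper's proof takes for granted, but it does not change the argument.
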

\begin{proof}
See Appendix~\ref{proof_prop_ge}.
\end{proof}

Several remarks are in order at this stage.

\begin{itemize}
\item
In the case where $\eta_{m,k}$ are set to zero for all $m \neq k$ (that is,
the focus is {\em not} on interference management), the solution in~(\ref{eq_ge}) reduces
to
\begin{eqnarray}
{\bf f}_k = \frac{ \widehat{\bf H}_k^{\dagger} {\bf g}_k}
{ \| \widehat{\bf H}_k^{\dagger} {\bf g}_k \| }
= \frac{ \sum_{\ell = 1}^P \mu_{k,\ell} \hsppp \gamma_{k,\ell} \cdot
e^{-j (\varphi_{k, \ell} + \nu_{k,\ell} )}
\cdot {\bf f}_{ {\sf tr}, n_{\ell}^k} }
{ \big\| \sum_{\ell = 1}^P \mu_{k,\ell} \hsppp \gamma_{k,\ell} \cdot
e^{-j (\varphi_{k, \ell} + \nu_{k,\ell} )}
\cdot {\bf f}_{ {\sf tr}, n_{\ell}^k} \big\| }.
\end{eqnarray}
This is not surprising, and the base-station {\em greedily} steers a beam along the
weighted set of top-$P$ beams from ${\cal F}_{\sf tr}$ for the $k$-th user. In other
words, the base-station generates a set of transmit weights that are matched to the
transmit angular spread of the channel as identified by the resolution of ${\cal F}_{\sf tr}$.

\item
In the case where $\eta_{m,k} = 0$ except if $m = k$ or $m = m'$ (for a specific
$m' \neq k$), it can be seen that ${\bf f}_k$ reduces to
\begin{eqnarray}
{\bf f}_k = \frac{ \widehat{\bf H}_k^{\dagger} {\bf g}_k -
\eta_{m',k} \cdot \left( {\bf g}_{m'}^{\dagger} \widehat{\bf H}_{m'}
\widehat{\bf H}_k^{\dagger} {\bf g}_k \right) \cdot
\widehat{\bf H}_{m'}^{\dagger} {\bf g}_{m'}  }
{ \big\| \widehat{\bf H}_k^{\dagger} {\bf g}_k -
\eta_{m',k} \cdot \left( {\bf g}_{m'}^{\dagger} \widehat{\bf H}_{m'}
\widehat{\bf H}_k^{\dagger} {\bf g}_k \right) \cdot
\widehat{\bf H}_{m'}^{\dagger} {\bf g}_{m'}
\big\| }.
\label{eq_fk_case2}
\end{eqnarray}
In other words, the specific design of ${\bf f}_k$ in~(\ref{eq_fk_case2})
removes a certain component of the beam corresponding to the $m'$-th user
from the beam corresponding to the $k$-th user.

\item
In the general case, while it gets much harder to simplify ${\bf f}_k$ in~(\ref{eq_ge}),
it can be seen that ${\bf f}_k$ has the structure
\begin{eqnarray}
{\bf f}_k = \frac{  \sum_{m = 1}^K \widehat{ \delta}_{m,k} \widehat{\bf H}_m^{\dagger} {\bf g}_m
} { \big\| \sum_{m = 1}^K \widehat{\delta}_{m,k} \widehat{\bf H}_m^{\dagger} {\bf g}_m
\big\| }
\label{eq_ge_structure}
\end{eqnarray}
for some complex scalars $\widehat{\delta}_{m,k}$. In other words, the optimal ${\bf f}_k$ is in the
span of $\{ \widehat{\bf H}_m^{\dagger} {\bf g}_m \}$ with the weights $\{ \widehat{\delta}_{m,k} \}$
that make the linear combination being a complicated function of $\{ \eta_{m,k} \}$ as well as
$\{ \widehat{\bf H}_m^{\dagger} {\bf g}_m \}$.

\item
The above observations are not entirely surprising given the Karhunen-Lo{\`e}ve
interpretation of the eigen-space of the channel(s)~\cite{akbar,tulino_ind,vasanth_it2}
and utilizing an expansion of ${\bf f}_k$ on this basis. Such an expansion is also consistent
with Prop.~\ref{prop_zf} which shows that in the pure interference management case
($\eta_{m,k} \rightarrow \infty$ for all $m \neq k$), ${\bf f}_k$ is given as
\begin{eqnarray}
{\bf f}_k = \frac{ \sum_{m = 1}^K {\cal G}_{m,k} \widehat{\bf H}_m^{\dagger} {\bf g}_m }
{ \big\| \sum_{m = 1}^K {\cal G}_{m,k} \widehat{\bf H}_m^{\dagger} {\bf g}_m
\big\|}
\label{eq_purezf}
\end{eqnarray}
where the $K \times K$ matrix ${\cal G} = \left( {\cal H} {\cal H}^{\dagger} \right)^{-1}$.

\item
On the other hand, from~(\ref{eq_eigenspace_reconst}), we note that 
$\widehat{\bf H}_m^{\dagger} {\bf g}_m$ is itself a linear combination of the beams
from ${\cal F}_{\sf tr}$. Thus, ${\bf f}_k$ in~(\ref{eq_ge}) 
is a linear combination of beams from ${\cal F}_{\sf tr}$. In other words, the
design of ${\bf f}_k$ is equivalent to a search over $N$
scalar (complex) weights, where $N$ denotes the size of the initial beam
alignment codebook at the base-station end.
\end{itemize}

With this interpretation, while Prop.~\ref{prop_ge} considers only the
maximization of $\widehat{ {\sf SLNR}} _k$ (not even the sum rate with
$\widehat{\bf H}_k$), we can consider the optimization of ${\cal R}_{\sf sum}$
over ${\bf f}_k$ from a class ${\cal F}_k$, defined as
\begin{eqnarray}
{\cal F}_k \triangleq \left\{ {\bf f}_k \hspp : \hspp
{\bf f}_k =
\frac{  \sum_{n = 1}^{N} 
\delta_{n,k} {\bf f}_{ {\sf tr}, \hsppp n}
} { \big\| \sum_{n = 1}^{N} 
\delta_{n,k} {\bf f}_{ {\sf tr}, \hsppp n}
\big\| } \hspp {\sf such} \hspp {\sf that} \hspp \delta_{n,k} \in {\mathbb{C}}, \hspp
k = 1, \cdots, K  \right\}.
\label{eq_class_Fk}
\end{eqnarray}
\begin{thm}
Assume that the same multi-user beams ${\bf g}_k$
as in the zeroforcing scheme are used for reception at the $k$-th user. Let
$\{ \delta^{\star}_{n,k} \}$ be defined as the solution to the search
over the complex scalars $\{ \delta_{n,k} \}$
\begin{eqnarray}
\{ \delta^{\star}_{n,k} \} = \arg\max \limits_{ \{ \delta_{n,k} \hsppp : \hsppp
{\bf f}_k \hsppp \in \hsppp {\cal F}_k \} } {\cal R}_{\sf sum}.
\end{eqnarray}
With ${\bf g}_k$ as above and
\begin{eqnarray}
{\bf f}_k = \frac{  \sum_{n = 1}^{N} 
\delta^{\star}_{n,k} {\bf f}_{ {\sf tr}, \hsppp n}
} { \big\| \sum_{n = 1}^{N} 
\delta^{\star}_{n,k} {\bf f}_{ {\sf tr}, \hsppp n} \big\| },
\label{eq_fk_scalar}
\end{eqnarray}
we obtain an upper bound to
the sum rate with the zeroforcing scheme.
\qed
\end{thm}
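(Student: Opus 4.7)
The proof is essentially a containment argument: I would show that the zeroforcing beamformers produced by Proposition~\ref{prop_zf} live inside the class $\mathcal{F}_k$ defined in~(\ref{eq_class_Fk}), so that the maximum of $\mathcal{R}_{\sf sum}$ over $\mathcal{F}_k$ with the same receive beams $\mathbf{g}_k$ must be at least the zeroforcing sum rate. There is no delicate optimization step; the content is entirely in verifying a span inclusion.

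The main step is to examine the structure of the zeroforcing beam in Prop.~\ref{prop_zf}. By~(\ref{eq_eigenspace_reconst}), the $k$-th row of $\mathcal{H}$ equals $\sum_{\ell = 1}^{P} \mu_{k,\ell}\gamma_{k,\ell} e^{j(\varphi_{k,\ell}+\nu_{k,\ell})}(\mathbf{f}_{{\sf tr},n_\ell^k})^{\dagger}$, so each column of $\mathcal{H}^{\dagger}$ is a linear combination of at most $P$ vectors from $\mathcal{F}_{\sf tr}$, and therefore a linear combination of all $N$ vectors in $\mathcal{F}_{\sf tr}$ (with zero weights on the absent ones). The $k$-th zeroforcing beam is the normalized $k$-th column of $\mathcal{H}^{\dagger}(\mathcal{H}\mathcal{H}^{\dagger})^{-1}$, hence it is itself a linear combination of the columns of $\mathcal{H}^{\dagger}$ with complex weights drawn from the $k$-th column of $(\mathcal{H}\mathcal{H}^{\dagger})^{-1}$. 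Composing these two linear combinations shows the zeroforcing beam has the form $\sum_{n=1}^{N} \delta^{\sf ZF}_{n,k}\mathbf{f}_{{\sf tr},n}$ for some complex scalars $\delta^{\sf ZF}_{n,k}$, and after normalization it belongs to $\mathcal{F}_k$.

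Once containment is established, the upper bound follows immediately. For the fixed receive beams $\mathbf{g}_k$ specified in the theorem statement, $\mathcal{R}_{\sf sum}$ is a deterministic function of $\{\mathbf{f}_k\}$ alone. Since the zeroforcing choice $\{\delta^{\sf ZF}_{n,k}\}$ is feasible for the optimization that defines $\{\delta^{\star}_{n,k}\}$, we have
\begin{equation*}
\mathcal{R}_{\sf sum}\bigl(\{\mathbf{f}_k \text{ from~(\ref{eq_fk_scalar})}\},\{\mathbf{g}_k\}\bigr)
\;=\; \max_{\{\delta_{n,k}\,:\,\mathbf{f}_k\in\mathcal{F}_k\}} \mathcal{R}_{\sf sum}
\;\ge\; \mathcal{R}_{\sf sum}\bigl(\{\mathbf{f}_k^{\sf ZF}\},\{\mathbf{g}_k\}\bigr),
\end{equation*}
which is exactly the claimed bound.

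The only step that requires care is the span argument, since one must track both the approximation $\widehat{\mathbf{H}}_k$ enforcing a $\mathcal{F}_{\sf tr}$-restricted representation and the inverse $(\mathcal{H}\mathcal{H}^{\dagger})^{-1}$ acting only on the coefficients rather than enlarging the span. I do not anticipate any obstacle beyond this bookkeeping; in particular, the optimization over $\{\delta_{n,k}\}$ is treated as an abstract supremum and no regularity of the maximizer is needed for the upper-bound direction.
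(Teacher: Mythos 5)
Your proposal is correct and follows essentially the same route as the paper: the paper's one-line proof invokes exactly the containment you verify, namely that the zeroforcing beam in~(\ref{eq_purezf}) is, via~(\ref{eq_eigenspace_reconst}), a linear combination of the vectors in ${\cal F}_{\sf tr}$ and hence a feasible point of the class ${\cal F}_k$ in~(\ref{eq_class_Fk}). Your write-up simply makes explicit the span bookkeeping that the paper leaves implicit.
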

The proof is trivial following the structure of ${\bf f}_k$ in the zeroforcing
scheme in~(\ref{eq_purezf}) and the definition of the class ${\cal F}_k$
in~(\ref{eq_class_Fk}).
Since the structure in~(\ref{eq_fk_scalar}) is obtained as a search over scalar
parameters, we call this upper bound a scalar optimization-based upper bound.
Further, while~(\ref{eq_fk_scalar}) is difficult to practically implement,
it provides a benchmark to compare the realizable zeroforcing scheme of
Prop.~\ref{prop_zf}.

Another important consequence of~(\ref{eq_fk_scalar}) 
is that the coefficients of ${\bf f}_k$ for either the zeroforcing or the upper bound
are (in general) not of equal amplitude. Thus, ${\bf f}_k$ has to be quantized for
implementation to ensure that the RF beamforming constraints are satisfied. In
particular, we compute $\widehat{\bf f}_k$ with an appropriate quantization scheme
as below
\begin{eqnarray}
|\widehat{\bf f}_k(i)| = \widetilde{\cal Q}_{ B_{\sf amp}} \left( |{\bf f}_k(i)| \right),
\hspp  \hspp \angle{ \widehat{\bf f}_k(i) } = \widetilde{\cal Q}_{ B_{\sf phase} } \left(
\angle{ {\bf f}_k(i) } \right),
\end{eqnarray}
and use them in transmissions for the $k$-th user. Good choices 
for $\widetilde{\cal Q}(\cdot)$ will be discussed in Sec.~\ref{sec5b}.

\subsection{Bounding ${\cal R}_{\sf sum}$ with an
Alternating/Iterative Optimization}
\label{sec3b}
We now propose an iterative 
maximization algorithm to optimize ${\cal R}_{\sf sum}$ over $\{ {\bf f}_k,
{\bf g}_k \}$. In this approach, we first optimize the ${\sf SLNR}$ metric
over ${\bf f}_k$ (assuming ${\bf g}_k$ is fixed), and then optimize the
${\sf SINR}$ metric over ${\bf g}_k$ (assuming ${\bf f}_k$ is fixed). The
algorithm is as follows:
\begin{enumerate}
\item Initialize $\{ {\bf g}_k^{(1)}, \hspp k = 1, \cdots, K \}$ randomly.

\item For $i = 1, \cdots, N_{\sf stop}$, where $N_{\sf stop}$ is chosen
according to a stopping criterion to determine convergence:
\begin{itemize}
\item
With $\{ {\bf g}_k = {\bf g}_k^{(i)}, k = 1, \cdots, K \}$ fixed, compute
${\bf f}_k^{(i)}$ 
as the solution to the following optimization
\begin{eqnarray}
{\bf f}_k^{(i)} = \arg \max \limits_{ {\bf f}_k }
\max_{ \{ \eta_{m,k} \} } {\sf SLNR}_k.
\end{eqnarray}
From Lemma~\ref{lem_ge_soln} in Appendix~\ref{app_ge}, the solution to the
above problem with $\{ \eta_{m,k} \}$ fixed can be seen to be
\begin{eqnarray}
{\bf f}_k =
\frac{ \Big( {\bf I}_{N_{\sf t}} + \sum_{m \neq k} \eta_{m,k} \hsppp
{\bf H}_m^{\dagger} {\bf g}_m^{(i)} {\bf g}_m^{(i) \hsppp \dagger}
{\bf H}_m  \Big)^{-1} {\bf H}_k^{\dagger} {\bf g}_{k}^{(i)} }
{ \Big\|
\Big( {\bf I}_{N_{\sf t}} + \sum_{m \neq k} \eta_{m,k} \hsppp
{\bf H}_m^{\dagger} {\bf g}_m^{(i)} {\bf g}_m^{(i) \hsppp \dagger}
{\bf H}_m  \Big)^{-1} {\bf H}_k^{\dagger} {\bf g}_{k}^{(i)} \Big\| }.
\end{eqnarray}
This candidate ${\bf f}_k$ has to be used to compute ${\sf SLNR}_k$ for
all possible weights $\{ \eta_{m,k} \}$ and optimized to produce $\{ {\bf f}_k^{(i)} \}$.

\item
With $\{ {\bf f}_k = {\bf f}_k^{(i)}, k = 1, \cdots, K \}$ fixed,
compute ${\bf g}_k^{(i+1)}$ as the solution to the following optimization
\begin{eqnarray}
{\bf g}_k^{(i+1)} = \arg \max \limits_{ {\bf g}_k } {\sf SINR}_k.
\end{eqnarray}
Again, from Lemma~\ref{lem_ge_soln} in Appendix~\ref{app_ge}, we have
\begin{eqnarray}
{\bf g}_k^{(i+1)} = \left( {\bf I}_{N_{\sf r}} +
\frac{\rho}{K} \sum_{m \neq k} {\bf H}_k {\bf f}_m^{(i)} {\bf f}_m^{(i), \dagger}
{\bf H}_k^{\dagger} \right)^{-1} {\bf H}_k {\bf f}_k^{(i)}.
\end{eqnarray}
\end{itemize}

\item Compute ${\cal R}_{\sf sum}$ with $\{ {\bf f}_k^{(N_{\sf stop})} \}$
and $\{  {\bf g}_k^{ (N_{\sf stop} + 1)} \}$ for a (potential) upper bound.
\end{enumerate}
Numerical studies show that for almost all channel realizations, the proposed
algorithm converges in a small number of steps ($N_{\sf stop} \approx 10$) to
lead to a tolerable level of difference between successive iterates of
${\cal R}_{\sf sum}$. Further, while we are unable to theoretically establish
that the proposed algorithm results in an upper bound to ${\cal R}_{\sf sum}$,
numerical studies (see Sec.~\ref{sec5c}) suggest that it leads to an upper bound
for almost all channel realizations.


\section{Numerical Studies}
\label{sec5}
We now present numerical studies in a single-cell downlink
framework to illustrate the advantages of the proposed beamforming solutions.
The channel model from~(\ref{eq:channel}) is used to generate a channel matrix 
with $L_k = 6$ clusters, AoDs uniformly distributed in a $120^{\sf o} \times 
30^{\sf o}$ coverage area, and AoAs uniformly distributed in a 
$120^{\sf o} \times 120^{\sf o}$ coverage area for each of the 
$k = 1, \cdots, K_{\sf cell}$ users in the cell. The AoD spread captures a 
traditional three-sector approach with a $30^{\sf o}$ zenith coverage and the 
AoA spread corresponds to the assumption of the use of multiple 
subarrays~\cite{vasanth_tap2018} with the best subarray limited to a 
$120^{\sf o} \times 120^{\sf o}$ coverage. $L_k = 6$ is justified from 
millimeter wave channel measurements reported in~\cite{vasanth_tap2018,qualcomm2}.
The antenna
dimensions assumed in these studies are $N_{\sf tx} = 16$ and $N_{\sf tz} = 4$ at
the base-station end, and $N_{\sf rx} = 2$ and $N_{\sf rz} = 2$ at each user. We 
consider simultaneous transmissions from the base-station to $K = 2$ out of 
the $K_{\sf cell}$ users in the cell. 

In terms of user scheduling, commonly used
criteria include a round robin or a proportionate fair scheduler. On the other hand,
a recently proposed directional scheduler~\cite{vasanth_jsac2017} leverages the
smaller beamwidths afforded by large antenna dimensions to schedule users with
dominant clusters that are spatially well-separated. In this work, the first of the
$K = 2$ users is scheduled randomly and the second user is chosen to ensure that
${\bf f}_{ {\sf tr}, n_1^2} \neq {\bf f}_{ {\sf tr}, n_1^1}$. In other words, the
considered scheduler implements a {\em directional avoidance} protocol with the
dominant cluster in the channel of the first user separated spatially from the
dominant cluster in the channel of the second user, as parsed by ${\cal F}_{\sf tr}$.
With this scheduler, we now primarily focus on the beamforming aspects.

For the initial beam alignment codebooks, based on the beam broadening principles
proposed in~\cite{raghavan_jstsp}, Figs.~\ref{fig_beampatterns_codebooks}(a)-(d)
illustrate the beam patterns in the azimuth plane for codebooks of sizes
$N = 32$, $N = 16$, $N = 8$ and $N = 4$, respectively,
to cover the $120^{\sf o} \times
30^{\sf o}$ AoD space with a $16 \times 4$ planar array at the base-station side.
The optimization proposed in~\cite{raghavan_jstsp} results in a discrete
Fourier transform (DFT) codebook solution for $N = 32$ and $N = 16$.
From
Fig.~\ref{fig_beampatterns_codebooks}, we observe that a beam codebook
of small size (e.g., $N = 4$) where each beam offers a broad directional
coverage can reduce the acquisition latency at the cost of peak and/or worst-case
array gain. On the other hand, a beam codebook of large size (e.g., $N = 32$)
where each beam can offer precision in terms of beamspace (and array gain)
comes at the cost of acquisition latency. For the codebooks at the user end,
two codebook sizes ($M = 4$ for a reduced acquisition latency and $M = 16$ for
performance improvement at the cost of acquisition latency) are considered with
similar beam design principles as for the base-station side.

\begin{figure*}[htb!]
\begin{center}
\begin{tabular}{cc}
\includegraphics[height=2.5in,width=3.1in] {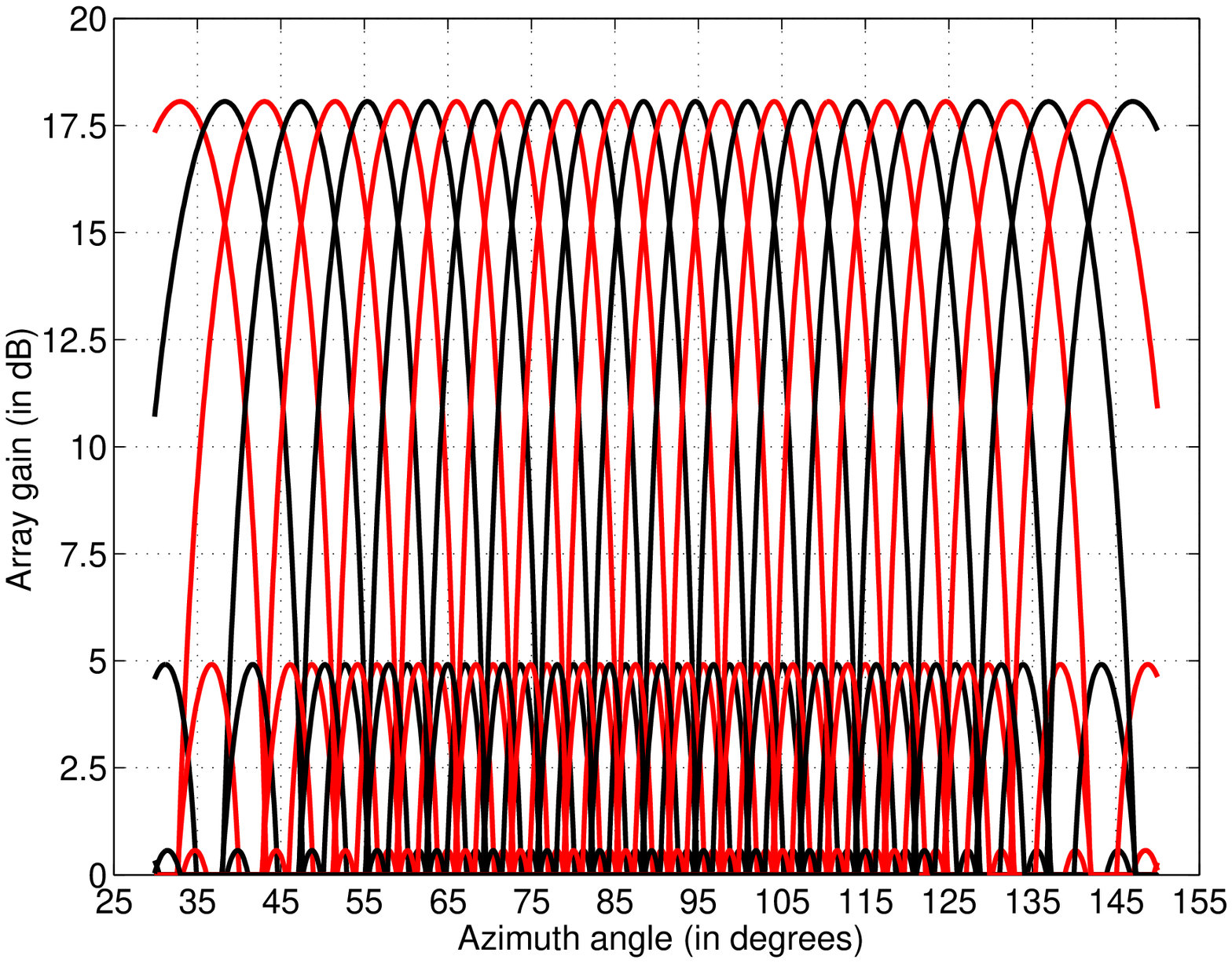}
&
\includegraphics[height=2.5in,width=3.1in] {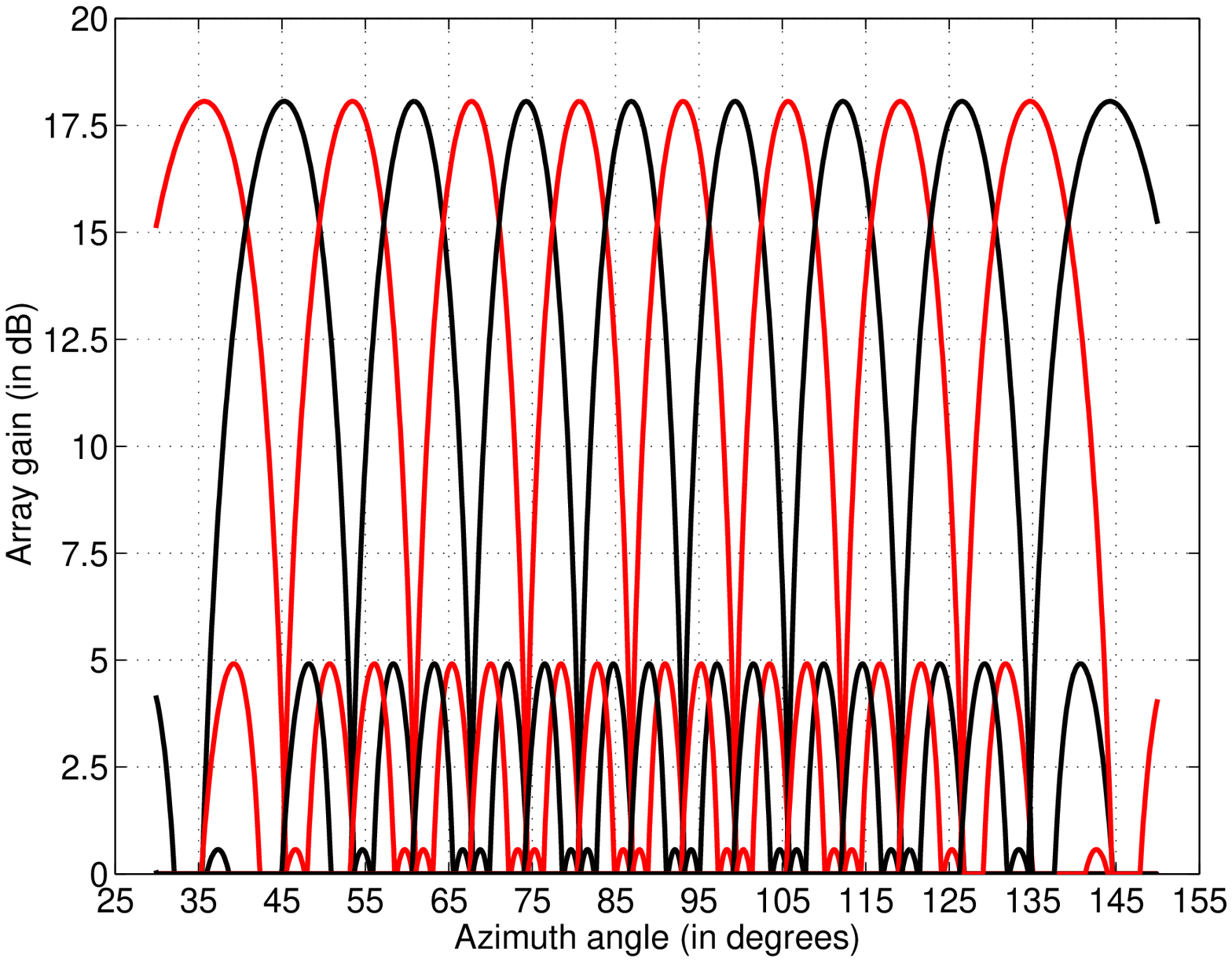}
\\
(a) & (b)
\\
\includegraphics[height=2.5in,width=3.1in] {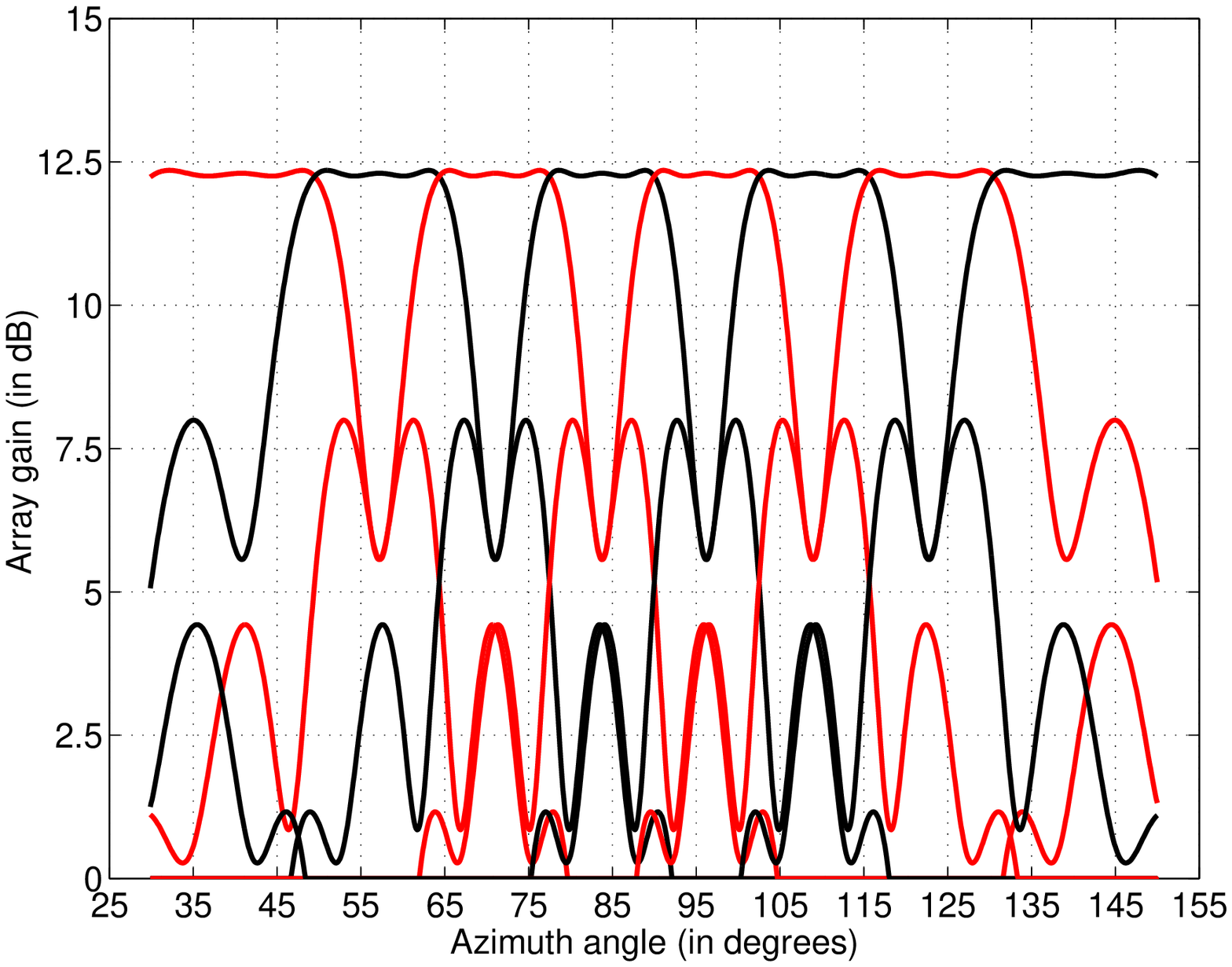}
&
\includegraphics[height=2.5in,width=3.1in] {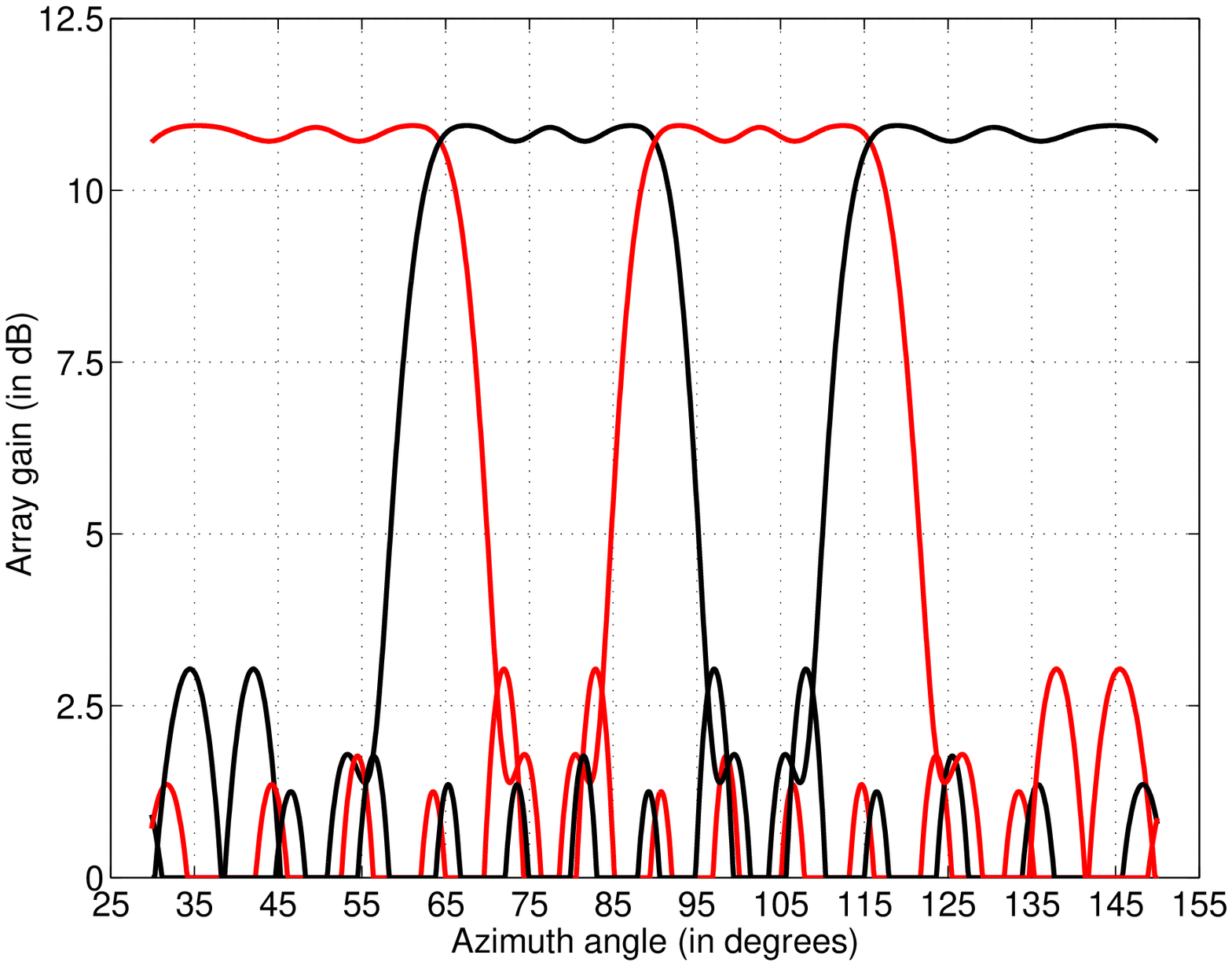}
\\
(a) & (b)
\end{tabular}
\caption{\label{fig_beampatterns_codebooks}
Beam patterns in the azimuth plane of four different base-station codebooks, all
covering a $120^{\sf o} \times 30^{\sf o}$ coverage area, with (a) $N = 32$,
(b) $N = 16$, (c) $N = 8$, and (d) $N = 4$ elements in ${\cal F}_{\sf tr}$.}
\end{center}
\end{figure*}

At this stage, it is worth noting that a number of system parameters impact the
performance of the proposed multi-user schemes such as: i) Granularity of
${\cal F}_{\sf tr}$ and ${\cal G}_{\sf tr}^k$ (initial beam alignment codebook
sizes), ii) Coarseness of channel approximation (rank-$P$), iii) Finite-rate
feedback of channel reconstruction parameters, and iv) Quantization of the resulting
multi-user beams.

\begin{figure*}[htb!]
\begin{center}
\begin{tabular}{cc}
\includegraphics[height=2.5in,width=3.0in]{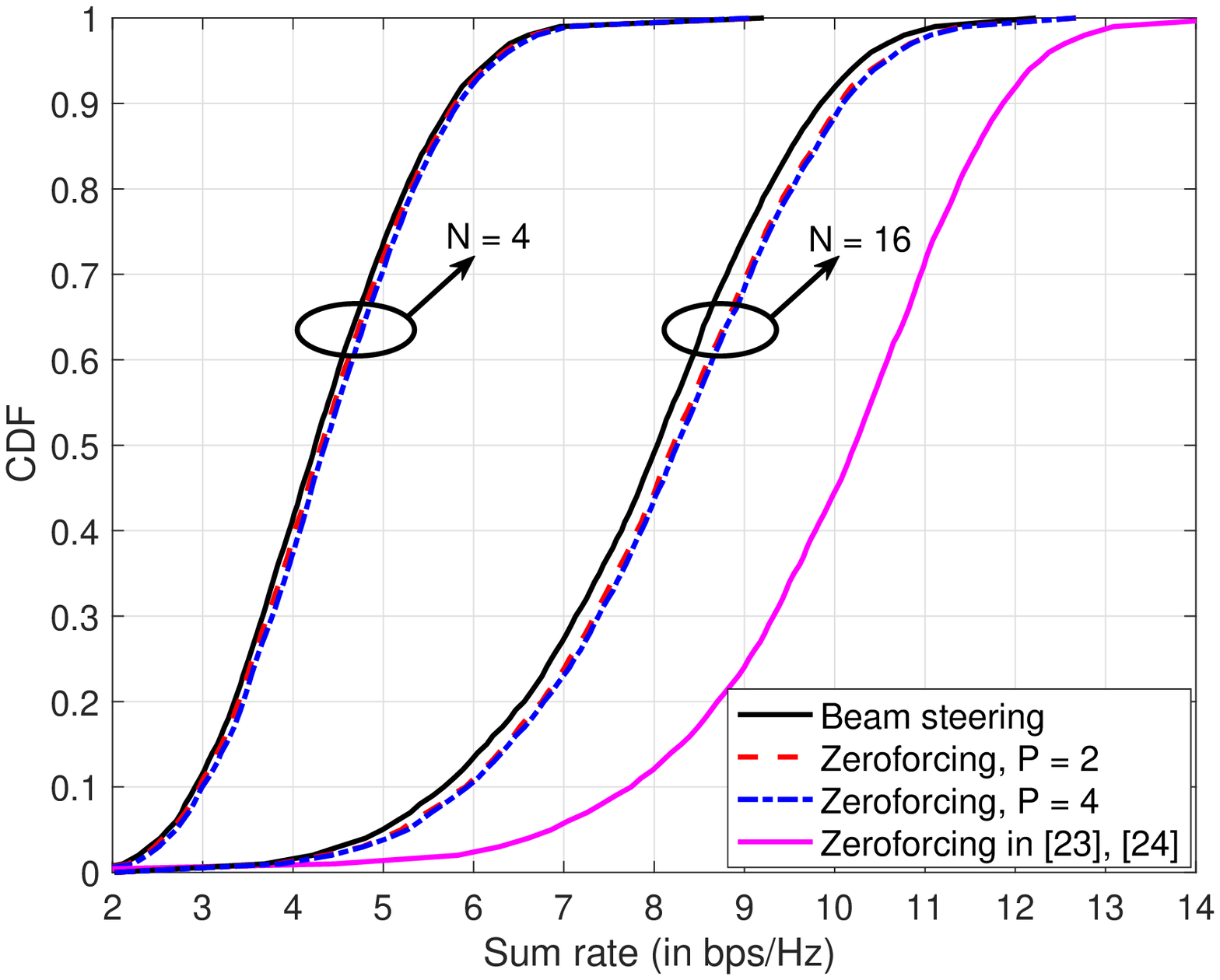}
& {\hspace{0.1in}}
\includegraphics[height=2.5in,width=3.0in]{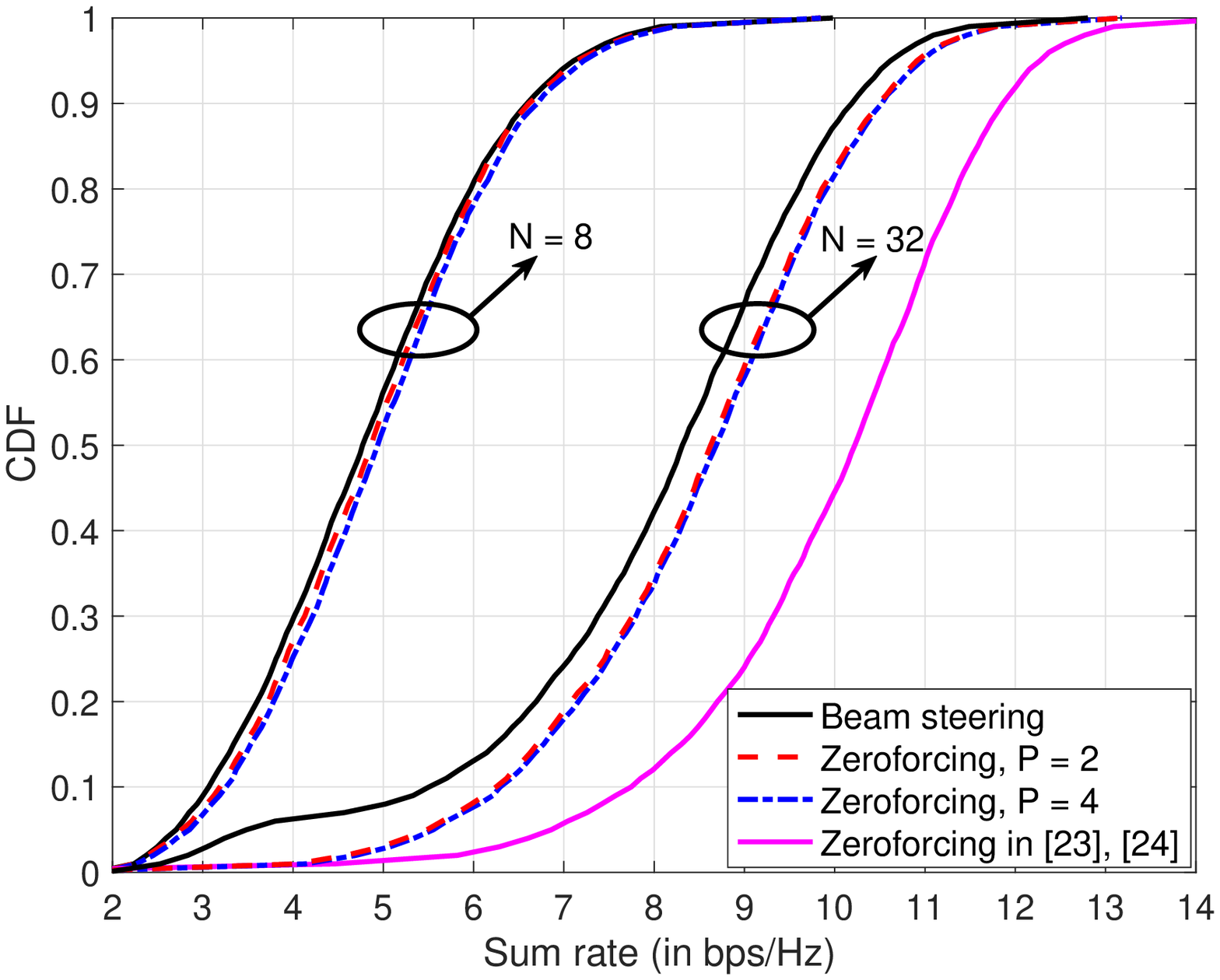}
\\
{\vspace{0.15in}}
(a) & (b) \\
\includegraphics[height=2.5in,width=3.0in]{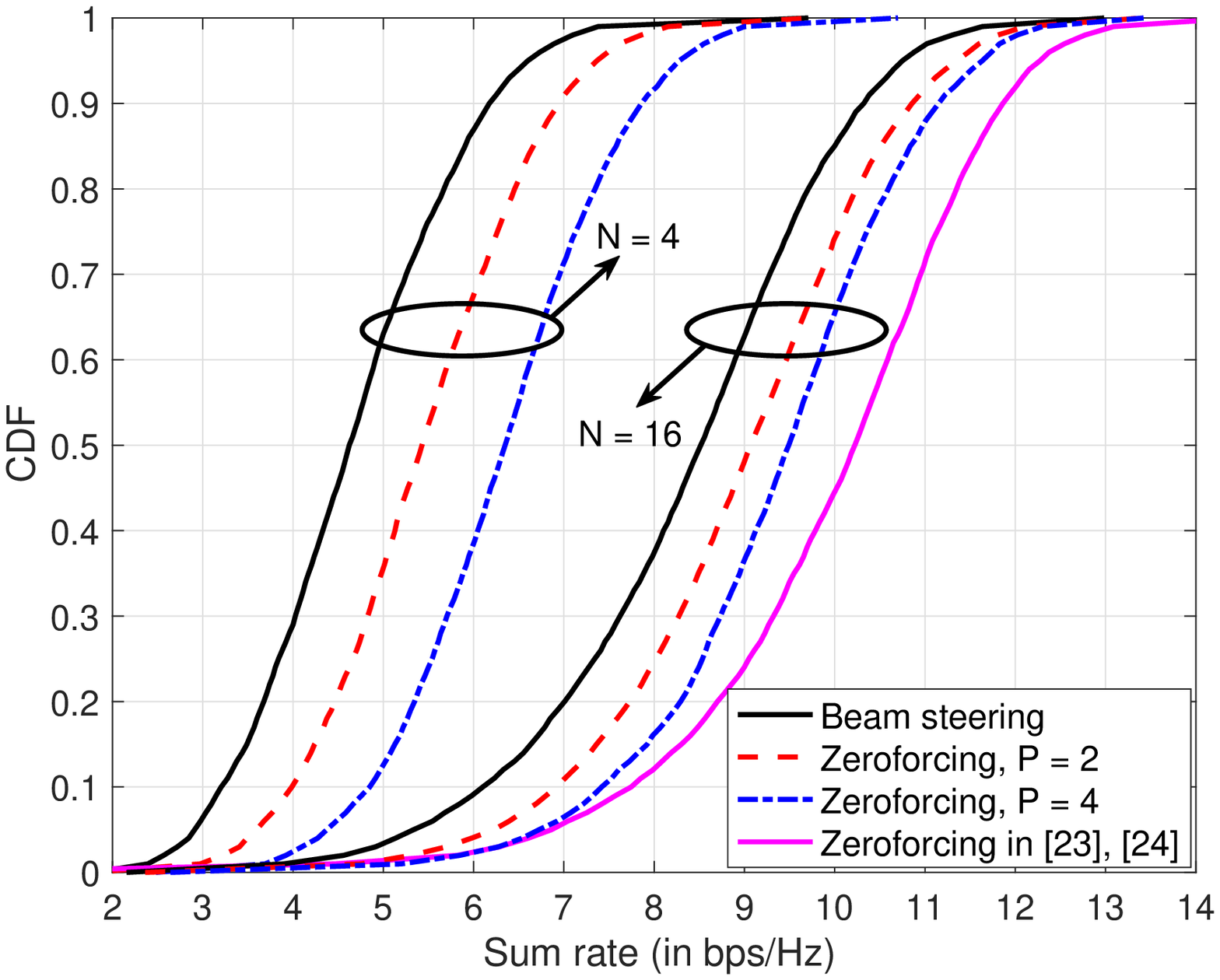}
& {\hspace{0.1in}}
\includegraphics[height=2.5in,width=3.0in]{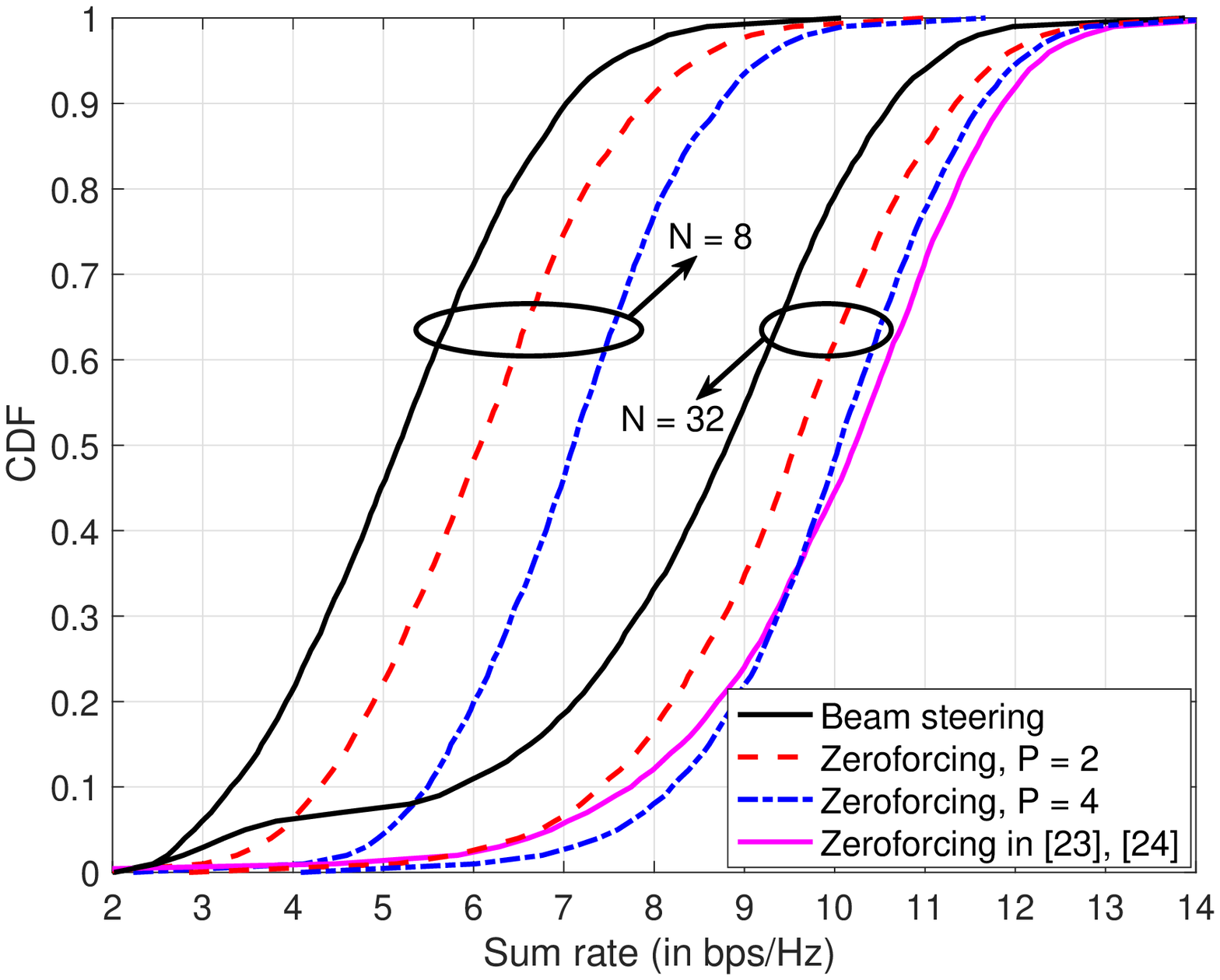}
\\
(c) & (d)
\end{tabular}
\caption{\label{fig:txCodebook}
CDF of sum rates for a beam steering scheme and the proposed zeroforcing scheme for
different choices of $N$ with $M = 4$ in (a) and (b), and $M = 16$ in (c) and (d).}
\end{center}
\end{figure*}

\subsection{Impact of Initial Beam Alignment Codebook}
\label{sec5a}
In the first study, we consider the relative performance of the zeroforcing
scheme (proposed in Prop.~\ref{prop_zf}) relative to a baseline beam steering
scheme with different initial beam alignment codebooks. We assume that the
system has infinite-precision feedback of channel reconstruction parameters
and infinite-precision resolution in the quantization of multi-user beams. We
also compare the performance of the proposed schemes with the zeroforcing scheme
presented in~\cite{vasanth_gcom16,vasanth_jsac2017}, where the system is assumed
to be able to find perfectly aligned directional beams in the training phase.
Fig.~\ref{fig:txCodebook} illustrates this comparative performance with different
choices of $P$ in approximating ${\bf g}_k^{\dagger} \widehat{\bf H}_k$ and different
codebook sizes ($N$ and $M$).

While it is intuitive that there should be diminishing performance as $P$ increases
(since increasing $P$ beyond the channel rank $L_k$ is not expected to improve
performance), whether this saturation in
performance is observed with a low-rank channel approximation is dependent on the
resolution of the codebooks. In particular, increasing $P$ 
when the codebook granularity is already poor (small $M$ and $N$) does not lead to
any performance improvement than observed with $P = 1$ (beam steering). On the other hand,
with a high resolution for ${\cal F}_{\sf tr}$ (large $N$), even a rank-$2$
approximation appears to be sufficient to reap most of the performance improvement
gains. This is because the performance of the baseline (beam steering) scheme is
already quite good and significant relative improvement over it with increasing $P$
has a lower likelihood unless the channel has a large number of similar gain clusters (a
low-probability event). When $M$ is large and $N$ is small, the beam steering performance
is poor and the channel can be better approximated with the higher codebook resolution
of ${\cal G}_{\sf tr}^k$ leading to a sustained performance improvement for even
up to $P = 4$. For example, with $N = 4$ or $8$ and $M = 16$, zeroforcing based on a
rank-$4$ channel approximation leads to around $2$ bps/Hz improvement at the median level.

In terms of performance comparison, note that the scheme
from~\cite{vasanth_gcom16,vasanth_jsac2017} assumes $P = 1$ but infinite-precision
in terms of beam alignment ($N = M \rightarrow \infty$). Thus, it is not surprising
that as $N$ and $M$ increase, the performance of the proposed schemes compare well
with that of~\cite{vasanth_gcom16,vasanth_jsac2017}. For lower codebook
resolutions, the proposed schemes overcome the codebook disadvantage by leveraging
a better channel approximation as $P$ increases.
These observations suggest that the optimal choice of the rank in approximating
${\bf g}_k^{\dagger} \widehat{\bf H}_k$ (which in turn determines the feedback overhead)
depends not only on the rank of the true channel ${\bf H}_k$, but also on the codebook
granularities. In general, a higher $P$ (and feedback overhead) is necessary if the
codebook resolution is rich enough at the user end to allow the parsing of the channel
better, but poor enough at the base-station end to allow a sustained performance
improvement with increasing $P$. In particular, we provide the following heuristic
design guidelines based on our studies
\begin{eqnarray}
P = \left\{ \begin{array}{cl}
1 & {\sf if} \hspp M \hspp {\sf and} \hspp N \hspp {\sf are} \hspp {\sf small} \\
2 & {\sf if} \hspp M \hspp {\sf is} \hspp {\sf small} \hspp {\sf and} \hspp
N \hspp {\sf is} \hspp {\sf large}
\\
4 & {\sf if} \hspp M \hspp {\sf is} \hspp {\sf large}.
\end{array}
\right.
\end{eqnarray}

\begin{figure*}[htb!]
\begin{center}
\begin{tabular}{cc}
\includegraphics[height=2.5in,width=3.0in] {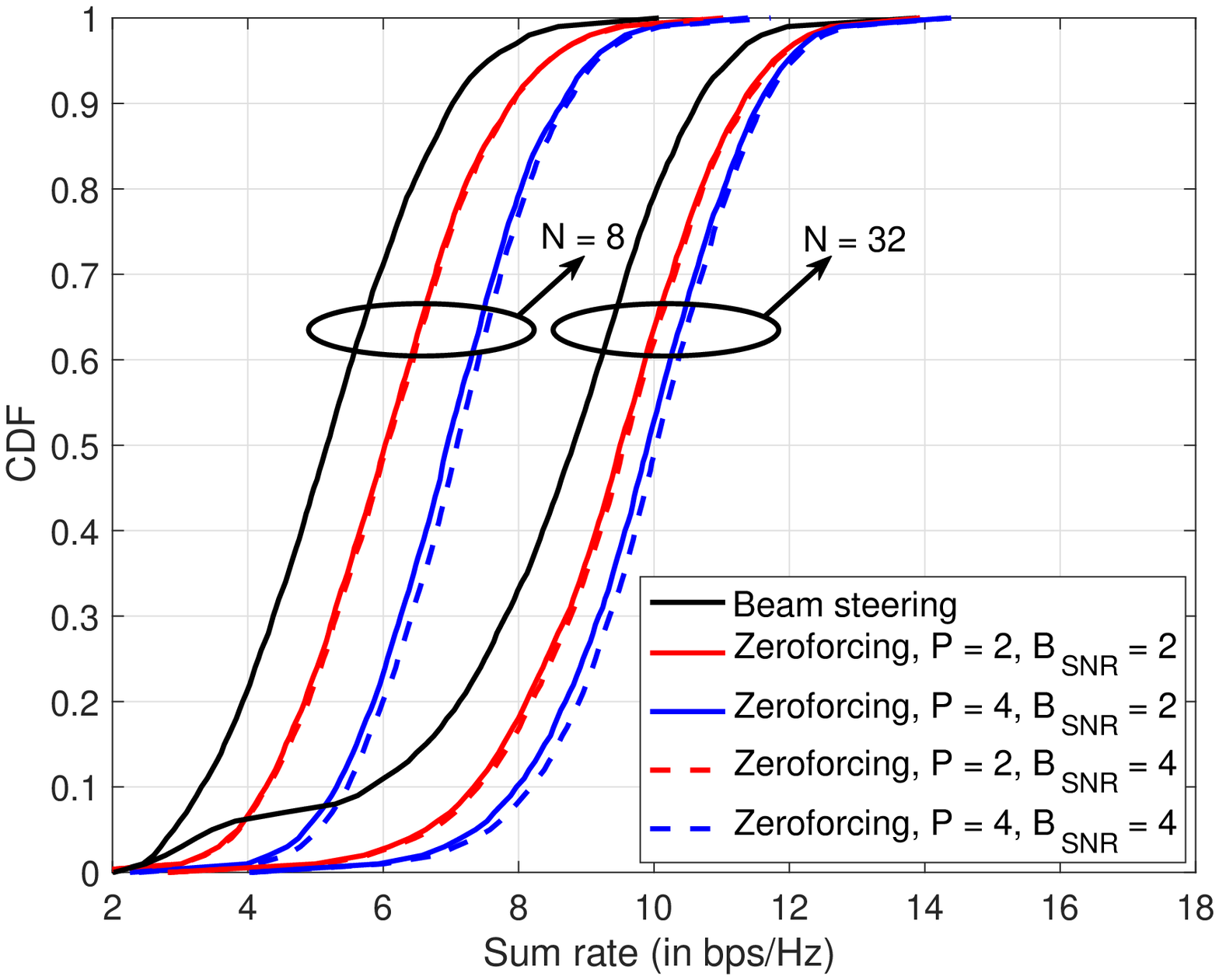}
& {\hspace{0.1in}}
  \includegraphics[height=2.5in,width=3.0in] {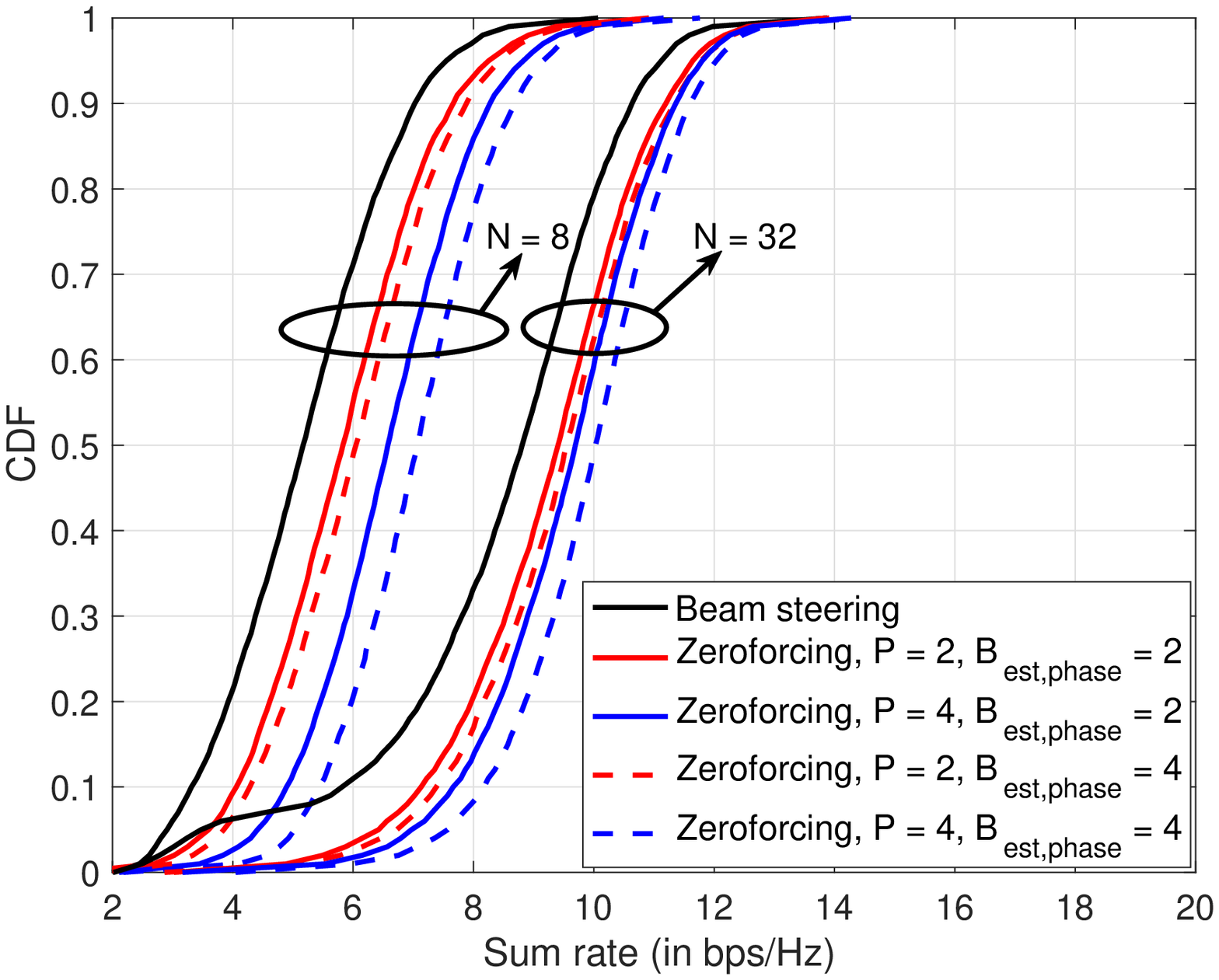}
\\
(a) & (b)
\\
  \includegraphics[height=2.5in,width=3.0in]{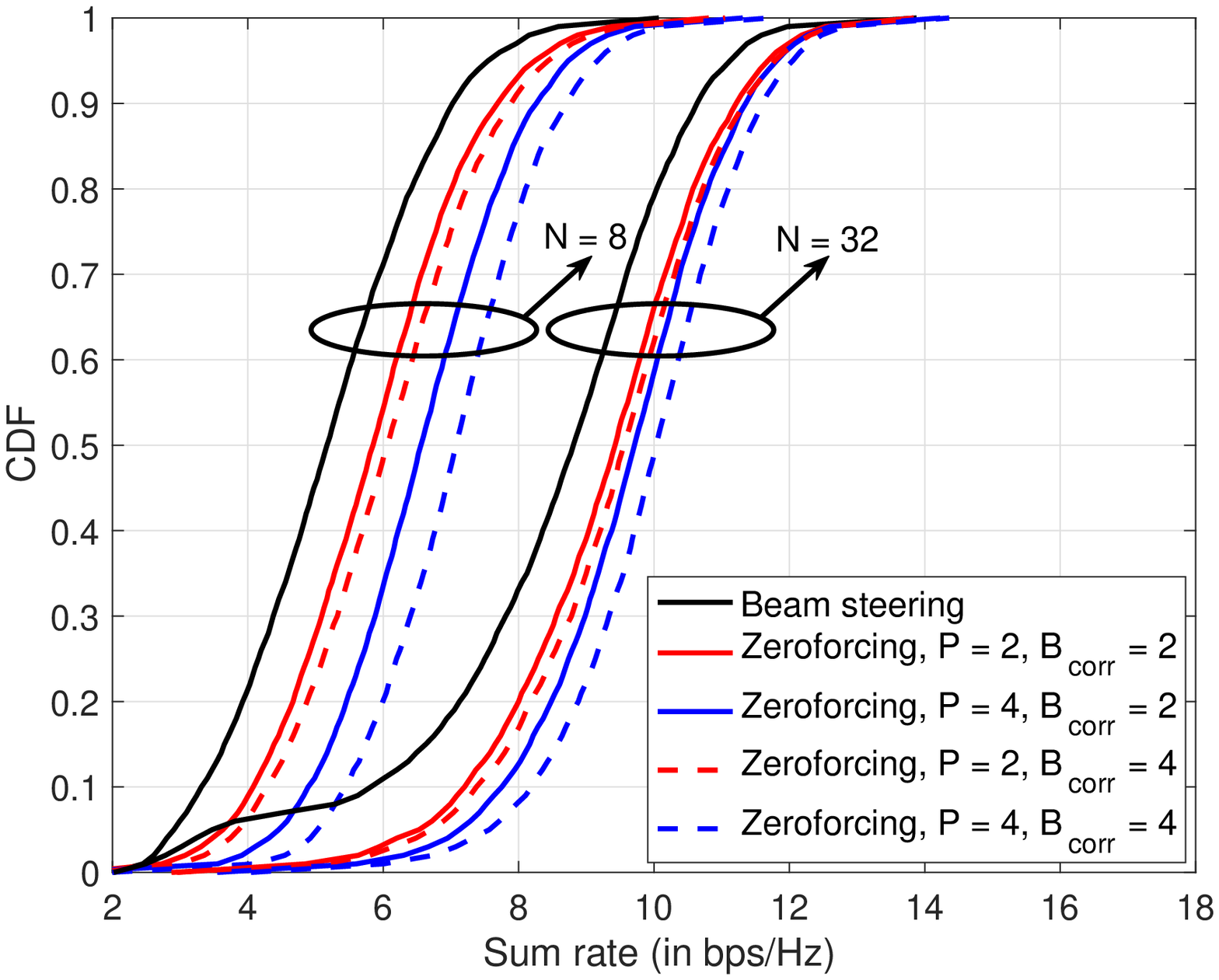}
& {\hspace{0.1in}}
  \includegraphics[height=2.5in,width=3.0in]{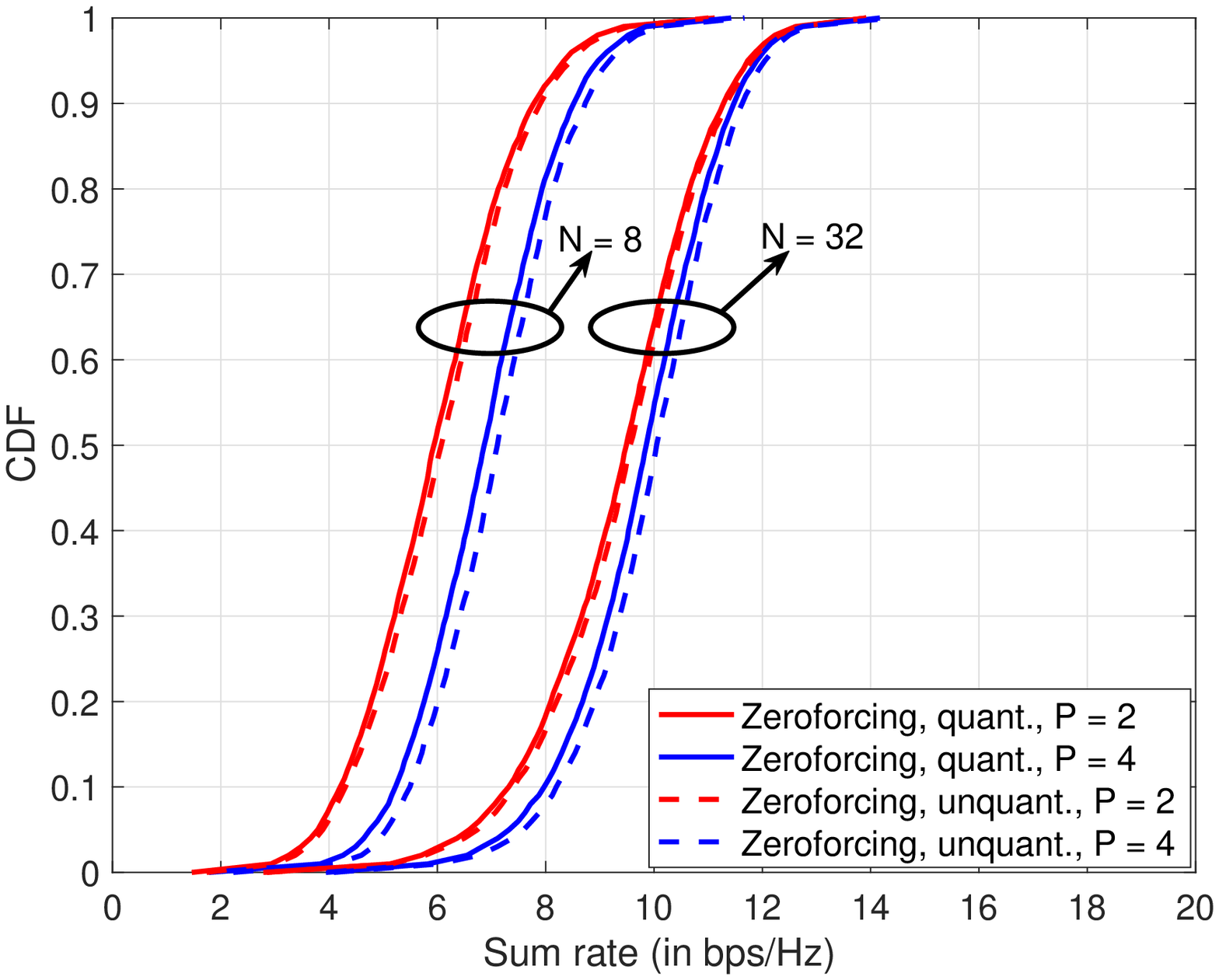}
\\
(c) & (d)
\end{tabular}
\caption{\label{fig:fdbkQuant}
CDF of sum rates of the different multi-user schemes with finite-rate feedback
of (a) only received ${\sf SNR}$s, (b) only received signal phases,
(c) only user side cross-correlation information, and (d) all the parameters
quantized simultaneously.
}
\end{center}
\end{figure*}

\subsection{Quantizer Design}
\label{sec3c}
Towards the second study, we utilize different quantization functions to
quantize the different parameters needed in channel reconstruction. For a
phase term $\theta$ with a dynamic range of $[0, 2\pi)$
(e.g., $\angle{ \widehat{\bf s}_{ {\sf tr}, k,\ell} }$ and $\angle{ \beta_{k,\ell} }$), we use a
uniform quantizer of the form
\begin{eqnarray}
{\cal Q}_{B}(\theta) = \frac{2 \pi}{2^B} \cdot {\sf round}\left( \frac{2^B}{2 \pi}
\cdot \theta 
\right),
\label{eq_quant_phase}
\end{eqnarray}
where ${\sf round}(\cdot)$ stands for a function that rounds off the underlying quantity to
the nearest integer. For an amplitude term $\alpha$ with a dynamic range of $[0, 1]$
(e.g., $|\beta_{k,\ell}|$), we use a non-uniform quantizer of the form
\begin{eqnarray}
{\cal Q}_B(\alpha) = \frac{ {\sf round}\left( (2^B - 1) \cdot \alpha \right) }{2^B - 1}.
\label{eq_ampl_quantize}
\end{eqnarray}
The reason for scaling with respect to $2^B - 1$ in~(\ref{eq_ampl_quantize}) instead of by
$2^B$ is because we want the quantized set to include both $0$ and $1$ for proper
cross-correlation quantization. For example, in the typical case where the multi-user
reception beam ${\bf g}_k = {\bf g}_{{\sf tr}, m_1^{k}}^{(k)}$, we have $|\beta_{k,1}| = 1$
and the use of a uniform amplitude quantizer will not allow the correct reproduction of
this important quantity at the base-station end.

Quantization of the ${\sf SNR}$ is performed on a dB scale rather than on a linear
scale. This is intuitive since ${\sf SNR}$ measurements have a wide dynamic range.
The proposed ${\sf SNR}$ quantizer is similar to quantizations considered in Fourth 
Generation (4G) systems. In particular, for a received ${\sf SNR}$ term 
$\varrho$ (in dB) with a
theoretically unbounded range (e.g., $10 \log_{10} \big( {\sf SNR}_{ {\sf rx},
\hsppp \ell }^{(k)} \big)$), we first cap $\varrho$ to a maximum value of
$\varrho_{\sf max}$ and quantize a spread of $\Delta$ (in dB) with $2^B$ quantization
levels (denoted as $\varrho_i$) as follows:
\begin{eqnarray}
\varrho_i = \varrho_{\sf max} - \frac{\Delta}{2^B - 1} \cdot i , \hspp i = 0, \cdots,
2^B - 1.
\end{eqnarray}
The quantization of $\varrho$ is given as
\begin{eqnarray}
{\cal Q}_B(\varrho) = \varrho_{i^{\star}} \hspp {\sf where} \hspp
i^{\star} = \arg \min \limits_{i = 0, \cdots, 2^B - 1} |\varrho - \varrho_i | .
\end{eqnarray}
The parameters $\varrho_{\sf max}$ and $\Delta$ correspond to the maximum quantizer
level value and the distance between adjacent quantizer levels, respectively. In 
our numerical
studies, we use $\varrho_{\sf max} = 30$ dB with $\Delta = 24$ dB for $B = 2$ bits, 
and $\Delta = 30$ dB for $B = 4$ bits.

A similar approach is pursued in quantizing the amplitudes of the multi-user beam.
While these amplitudes do not span a wide range, the relative variation across the
antenna array can show wide variations. Specifically,
the infinite-precision zeroforcing beams generated in Prop.~\ref{prop_zf} are
quantized to meet the RF constraints in~(\ref{eq_hybrid_precoder_constraints})
as described next. Since $\| {\bf f}_k \| = 1$, we assume that on average
${\bf f}_k(i) \approx \frac{1}{ \sqrt{ N_{\sf t} } }$. By scaling $|{\bf f}_k(i)|^2$
by $N_{\sf t}$, we can ensure that $10 \log_{10} \left(N_{\sf t} \cdot
|{\bf f}_k(i)|^2 \right)$ is centered around $0$ dB and for this quantity, we
generate $2^B$ quantization levels in dB scale (denoted as $f_i$) corresponding to
a step size of $\Delta f$ (in dB) as follows:
\begin{eqnarray}
f_i = \Delta f \cdot \left[ i + 1 - 2^{B-1} \right], \hspp i = 0, \cdots, 2^B -1.
\end{eqnarray}
With these levels that are spaced $\Delta f$ apart, we obtain the quantized beam weights as
\begin{eqnarray}
| {\widehat{\bf f}}_k(i) | = \widetilde{\cal Q}_B( | {\bf f}_k(i) | )
= \frac{1}{ \sqrt{ N_{\sf t}} } \cdot \left\{
\begin{array}{cl}
0 & {\sf if} \hspp 10 \log_{10} \left(N_{\sf t} \cdot |{\bf f}_k(i)|^2 \right)
< - \Delta f \cdot (2^{B-1} - 1) \\
10^{ \frac{ f_{j^{\star}} }{20} } & {\sf otherwise},
\end{array}
\right.
\end{eqnarray}
where
\begin{eqnarray}
j^{\star} = \arg \min_{j }
10 \log_{10} \left(N_{\sf t} \cdot |{\bf f}_k(i)|^2 \right) - f_j
\hspp {\sf provided} \hspp 10 \log_{10} \left(N_{\sf t} \cdot |{\bf f}_k(i)|^2 \right)
> f_j.
\label{eq_constraint}
\end{eqnarray}
The constraint in~(\ref{eq_constraint}) ensures that $\sum_i | {\widehat{\bf f}}_k(i) |^2
\leq 1$. In our numerical studies, we use $\Delta f = 1$ dB for $B = 4$ bits leading
to a range of $-7$ to $8$ dB for $f_i$. We also use $\Delta f = 0.25$ dB for $B = 6$
bits leading to a range of $-7.75$ to $8$ dB for $f_i$. For the phase quantities (that
is, $\widetilde{\cal Q}_B( \angle{ {\bf f}_k(i) } )$), we reuse
${\cal Q}_B( \angle{ {\bf f}_k(i) } )$ as in~(\ref{eq_quant_phase}).

\subsection{Finite-Rate Feedback}
\label{sec5b}
With the quantizer design as described in~Sec.~\ref{sec3c},
we now consider the impact of finite-rate feedback of the
quantities of interest necessary for the channel reconstruction step. As noted
from Table~\ref{table_feedback}, each user quantizes and feeds back to the
base-station: i) the base-station beam indices, ii) the received
${\sf SNR}$s, iii) the received signal's phases, and iv) user side codebook
correlation information (amplitude and phases). To reduce clutter in
presentation, in our studies illustrated in Fig.~\ref{fig:fdbkQuant}, we
only focus on the $N = 8$ and $N = 32$ codebooks for beam alignment with
$M = 16$ at the user side. Fig.~\ref{fig:fdbkQuant}(a) considers the impact of
$B_{\sf SNR}$ (the number of bits used in received ${\sf SNR}$ quantization) while
infinite-precision is used for the signal phases and codebook correlation. This
figure shows that the proposed scheme is robust to $B_{\sf SNR}$ in the sense
that for both the $P = 2$ and $P = 4$ cases, the performance improvement is minimal
as $B_{\sf SNR}$ is increased from $2$ bits to $4$ bits.

On the other hand, Fig.~\ref{fig:fdbkQuant}(b) considers the impact of
$B_{ {\sf est}, \hsppp {\sf phase}}$ (the number of bits used in received
signal phase quantization) while infinite-precision is used for received ${\sf SNR}$
and codebook correlation. In the third experiment, we study the impact
of codebook correlation quantization with infinite-precision for the other two quantities.
To simplify this investigation, we assume that $B_{\sf corr, \hsppp phase} =
B_{\sf corr, \hsppp amp} = B_{\sf corr}$ and Fig.~\ref{fig:fdbkQuant}(c) considers
the impact of $B_{\sf corr}$ on performance. Both Figs.~\ref{fig:fdbkQuant}(b)
and~(c) show that increasing $B_{ {\sf est}, \hsppp {\sf phase}}$ or $B_{\sf corr}$
has maximal impact on performance for 
large $P$. In other words, if
the channel approximation gets better, it becomes pertinent to quantize the
phase terms and codebook correlation information in the channel reconstruction
with a finer resolution.

While Figs.~\ref{fig:fdbkQuant}(a)-(c) study the quantization of each parameter
of interest separately, we now consider the impact of finite-rate quantization of
all the parameters necessary for channel reconstruction (relative to infinite-precision
quantization). For this, we consider the case where $B_{\sf SNR} =
B_{ {\sf est}, \hsppp {\sf phase}} = B_{ {\sf corr}, \hsppp {\sf amp}} =
B_{ {\sf corr}, \hsppp {\sf phase}} = 3$ bits with $M = 16$. From Fig.~\ref{fig:fdbkQuant}(d),
we observe that the proposed joint quantization scheme performs comparable with a scheme that
uses infinite-precision for all the parameters of interest.

\begin{table*}[htb!]
\caption{Feedback overhead ($B_{\sf feedback}$) for different choices of $P$ and $N$}
\label{table_fb_overhead}
\begin{center}
\begin{tabular}{|c||c|c|c|c|}
\hline
& $N = 4$ & $N = 8$ & $N = 16$ & $N = 32$ \\
\hline
$P = 2$ & $
14$ & $
16$
& $
18$ & $
20$ \\
\hline
$P = 4$ & $
44$ & $
48$
& $
52$ & $
56$
\\
\hline
\end{tabular}
\end{center}
\end{table*}

At this stage, it is important to note that the feedback overhead of $\varphi_{k,\ell}$ and
$\nu_{k,\ell}$ can be combined\footnote{Similarly, it might be envisioned that the feedback
of $\gamma_{k,\ell}$ and $\mu_{k,\ell}$ can be combined, but their dynamic ranges are different.
Feedback overhead reduction could be a useful topic of study in future research.} since
they are always used in the form $\varphi_{k,\ell} + \nu_{k,\ell}$
(see~(\ref{eq_eigenspace_reconst})). Thus, based on the above studies, we make the following
heuristic design guidelines on the feedback overhead
\begin{eqnarray}
B_{\sf SNR}  
= 2 \hspp {\sf bits}, \hspp \hspp
B_{ {\sf est}, \hsppp {\sf phase}} + B_{ {\sf corr}, \hsppp {\sf phase}} =
B_{{\sf corr}, \hsppp {\sf amp}} = 
P \hspp {\sf bits}.
\end{eqnarray}
Combining this information with Table~\ref{table_feedback}, the total feedback
overhead from each user is given as 
\begin{align}
B_{\sf feedback} & = P \cdot \left[ \log_2(N) + B_{\sf SNR} + B_{ {\sf corr}, \hsppp {\sf amp}}
\right] + (P-1) \cdot \left[ B_{ {\sf est}, \hsppp {\sf phase}} +
B_{ {\sf corr}, \hsppp {\sf phase}}  \right] \hspp \hspp ({\sf in} \hspp {\sf bits})
\\
& = P \cdot \left[ \log_2(N) + 2 + P \right] + (P-1) \cdot P
\\
& = P \cdot \left[ \log_2(N) + 2P + 1 \right] \hspp {\sf bits}.
\end{align}
$B_{\sf feedback}$ is presented in Table~\ref{table_fb_overhead} for the choices
$P \in \{2, 4\}$ and $N \in \{ 4, 8, 16 , 32 \}$. From Table~\ref{table_fb_overhead},
a $56$ bit control payload appears to be sufficient to convey the information
necessary for multi-user beamforming across different choices of $M$, $N$ and $P$.
On a first glance, while this may appear to be onerous, similar feedback overheads
are currently considered viable in 3GPP 5G-NR design. In particular, two types
of feedback methods are being
studied~\cite[Sec.\ 8.2.1.6.3, pp.\ 24-26]{3gpp_CM_rel14_38912}: i) Type-I feedback
of both the beam indices and RSRPs of the top-$4$ beams, and ii) a more general
Type-II feedback that can include feedback of covariance matrices, co-phasing
factors with different codebook structures, etc. Further, the time-scales at
which this information has to be reported is on the order of the coherence time
of the channel (which varies from a few milliseconds at high speeds to a few
hundreds of milliseconds in an indoor or low speed 
scenario~\cite{vasanth_comm_mag_16,vasanth_tap_blockage})
allowing multiple long PUCCH 
instances for beam reporting. Also, this control information can be fed back 
on legacy carriers such as 4G links in a non-standalone deployment. Thus, 
the feedback
overhead necessary for realizing the proposed schemes are practically viable.

\subsection{Quantization of Multi-User Beams and Comparison with Upper Bounds}
\label{sec5c}
In the third study, the effect of quantizing the multi-user beams to ensure that
it fits the RF precoder constraints as in~(\ref{eq_hybrid_precoder_constraints})
is considered. In general, if a low rate quantization is used ($B_{\sf amp}$ or
$B_{\sf phase}$) as $P$ increases, the resultant multi-user beam's sum rate
performance could be worse than that with beam steering. In particular,
from Fig.~\ref{fig:bfQuant}(a), we observe that a higher phase resolution
($B_{\sf phase}$) is necessary for improved performance as the codebook resolution
improves (large $N$) or when $P$ increases. On the other hand, from
Fig.~\ref{fig:bfQuant}(b), we observe that an amplitude resolution $B_{\sf amp}$)
on the order of $4$-$6$ bits can produce a performance comparable with the unquantized
scheme.

In Fig.~\ref{fig:upperBounds}, we finally compare the performance of the proposed
zeroforcing scheme with the beam steering scheme and the bounds established in
Sec.~\ref{sec4}. We also benchmark the performance with a fully-digital system
employing: i) maximal ratio transmission/maximal ratio combining (MRT/MRC) beams
in the initial alignment phase, and ii) a zeroforcing scheme performed using the
MRT/MRC beams as in~\cite{vasanth_gcom16,vasanth_jsac2017}. Note that the MRT/MRC 
scheme is different from that employed in~\cite{vasanth_gcom16,vasanth_jsac2017} 
where perfect beam steering vectors are used in deriving the zeroforcing 
structure. In terms of differences between these structures, the readers are 
referred to~\cite{vasanth_gcom15}. For the proposed scheme, an $M = 16$ codebook 
is used at the user end. Figs.~\ref{fig:upperBounds}(a)-(b) and Fig.~\ref{fig:upperBounds1} 
illustrate the trends with $N = 8$, $N = 32$, and $N = 256$ codebooks, respectively. 
For $N = 256$, we employ a DFT codebook at the base-station covering the $120^\circ 
\times 30^\circ$ AoD space.

With low-resolution quantization, 
we note that there
is a considerable performance gap between the zeroforcing scheme and the scalar
optimization-based upper bound (up to $2$ bps/Hz). On the other hand, this gap 
reduces as $N$ increases suggesting the good
performance of the zeroforcing scheme. Nevertheless, the performance gap between the
proposed zeroforcing scheme and the upper bounds suggests the possible utility of
more advanced feedback mechanisms, a topic for future research. In all the plots, there
is a considerable gap between the performance of the upper bounds with the
fully-digital system. Plausible explanations for this observation include the use 
of small arrays at the user end ($2 \times 2$) and $L_k = 6$ clusters in the channel. 
A more complex hybrid precoding architecture achieved by
optimally choosing $\bF_{\sf Dig}$ with respect to some performance metric may assist
in bridging this gap. It is also to be pointed out
that while the alternate
optimization-based sum rate serves as an upper bound for most channel realizations,
for some realizations (especially at low ${\sf SINR}$ values where the ${\sf SLNR}$
optimization has a different behavior than the sum rate maximization), this connection
breaks down.

\begin{figure*}[htb!]
\begin{center}
\begin{tabular}{cc}
\includegraphics[height=2.4in,width=3.0in]{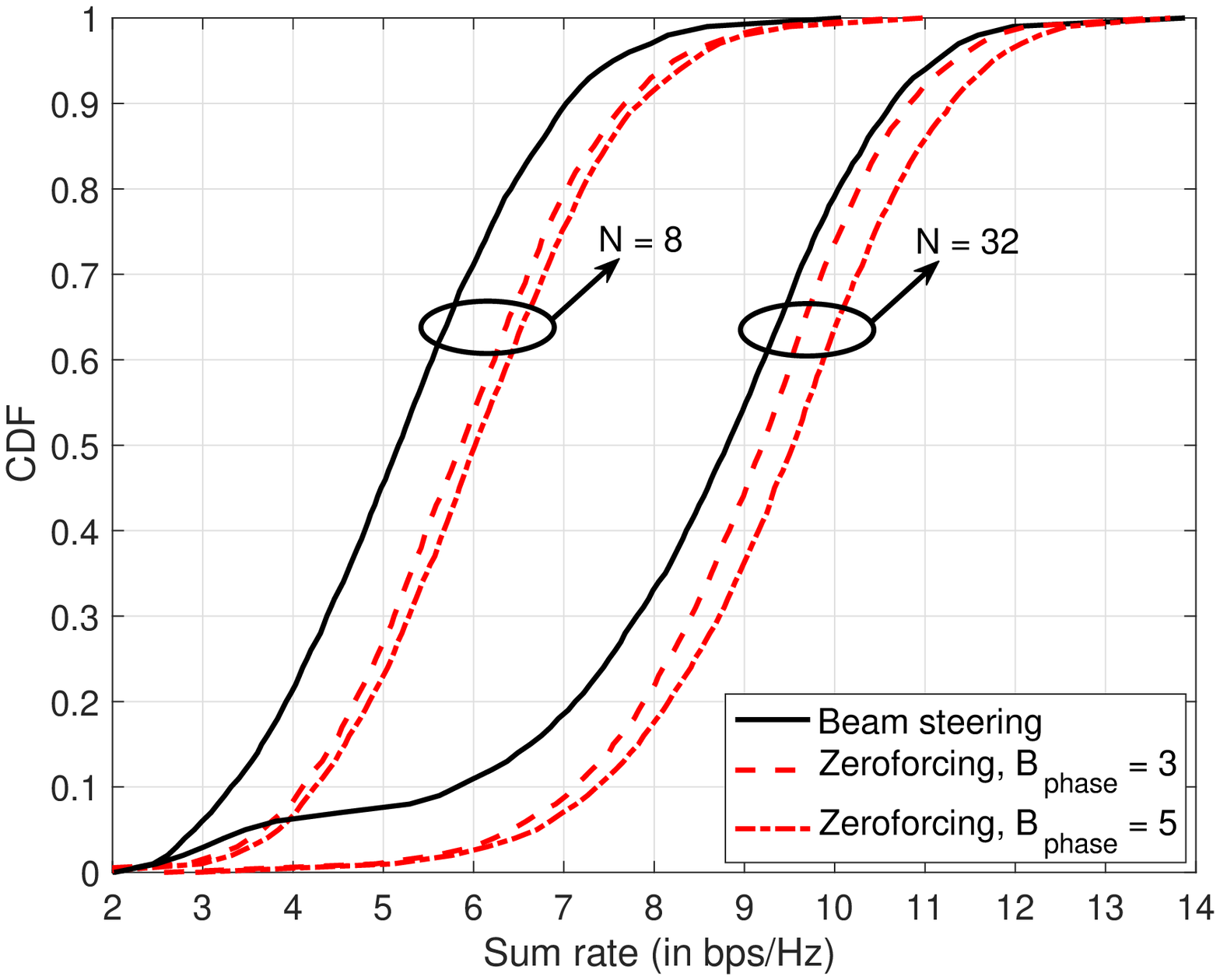}
& {\hspace{0.1in}}
\includegraphics[height=2.4in,width=3.0in]{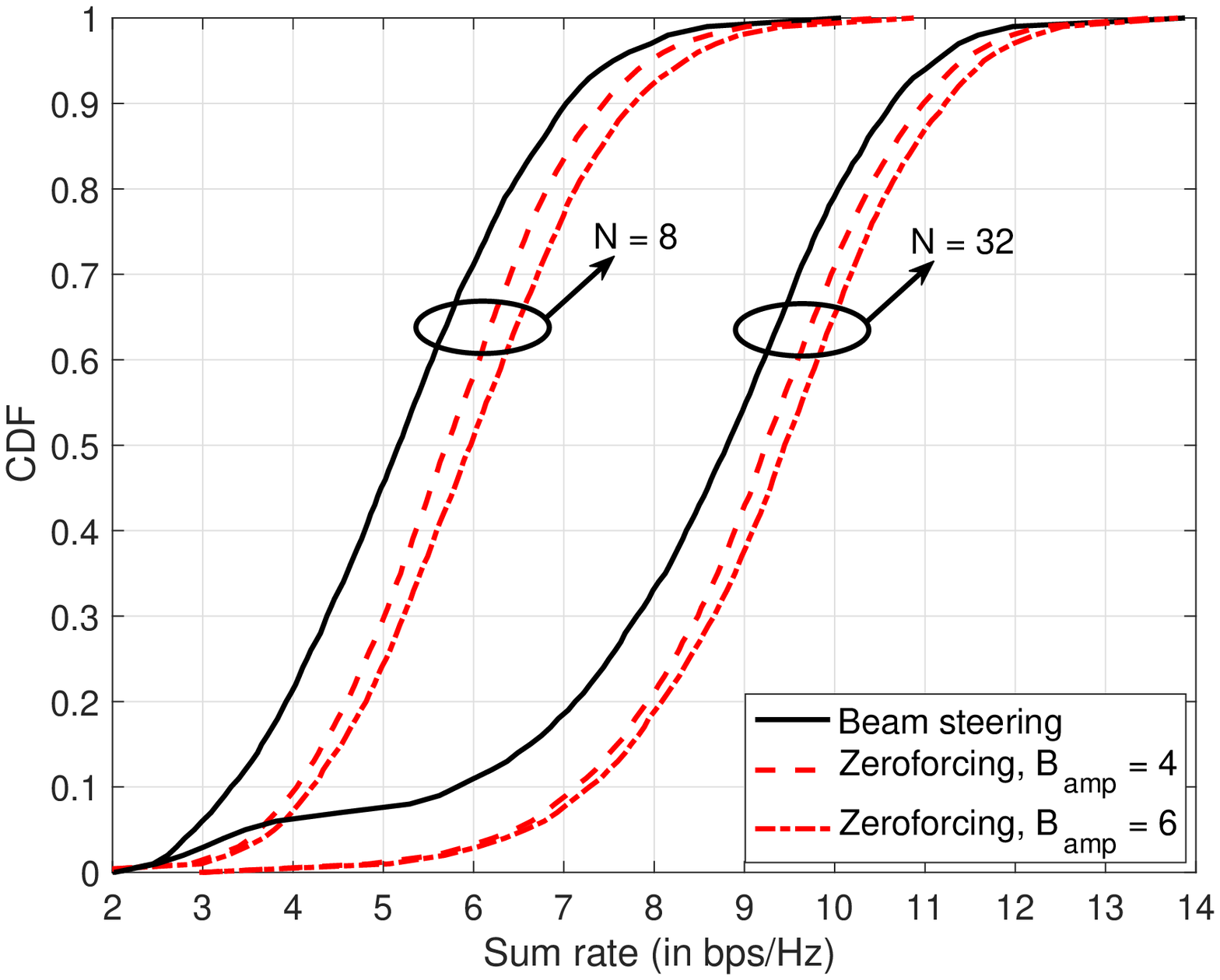}
\\
(a) & (b)
\end{tabular}
\caption{\label{fig:bfQuant}CDF of sum rates of the different schemes with
quantization constraints on the multi-user beam's (a) phases and (b) amplitudes.}
\end{center}
\end{figure*}

\begin{figure*}[htb!]
\begin{center}
\begin{tabular}{cc}
\includegraphics[height=2.4in,width=3.0in]{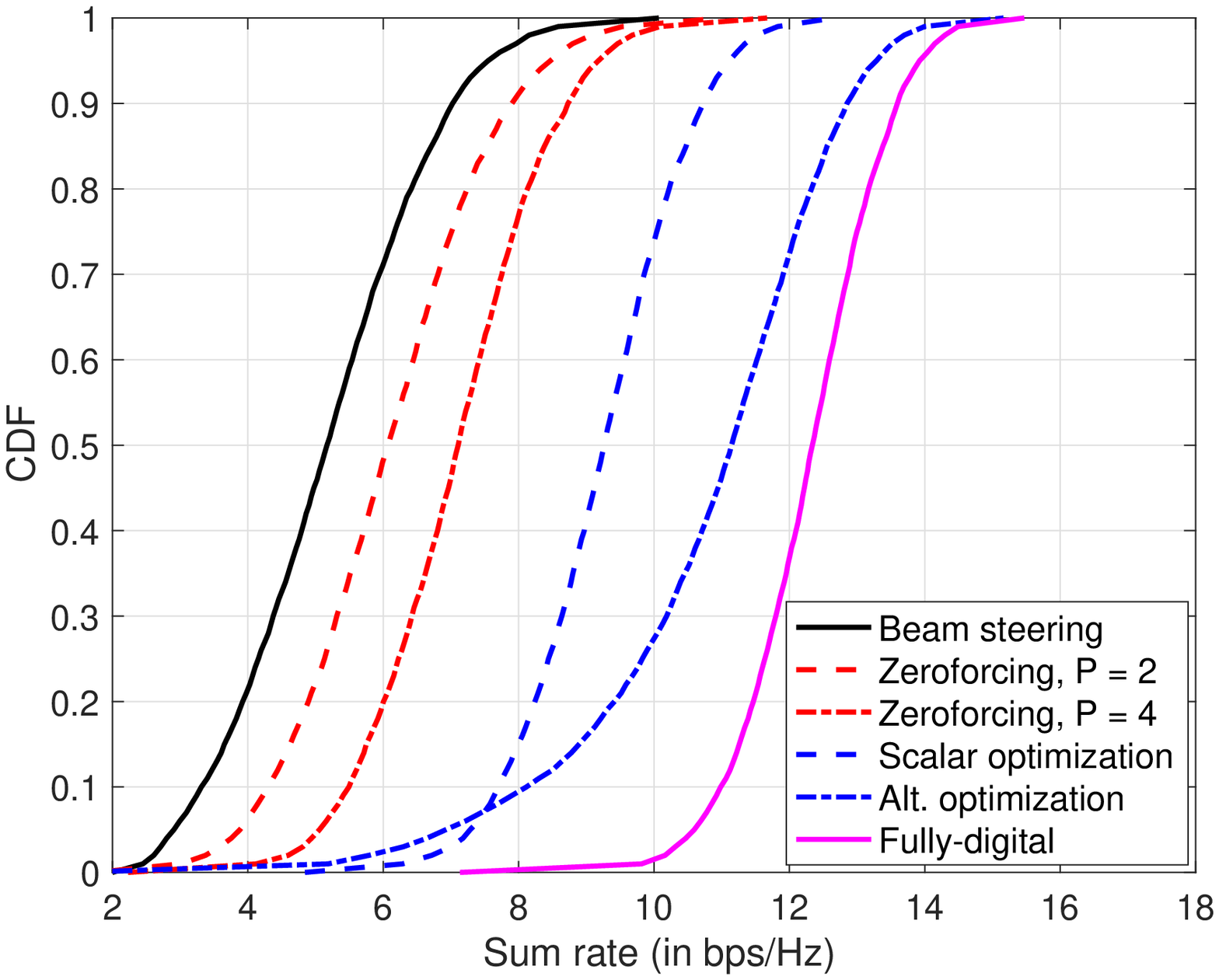}
& {\hspace{0.1in}}
\includegraphics[height=2.4in,width=3.0in]{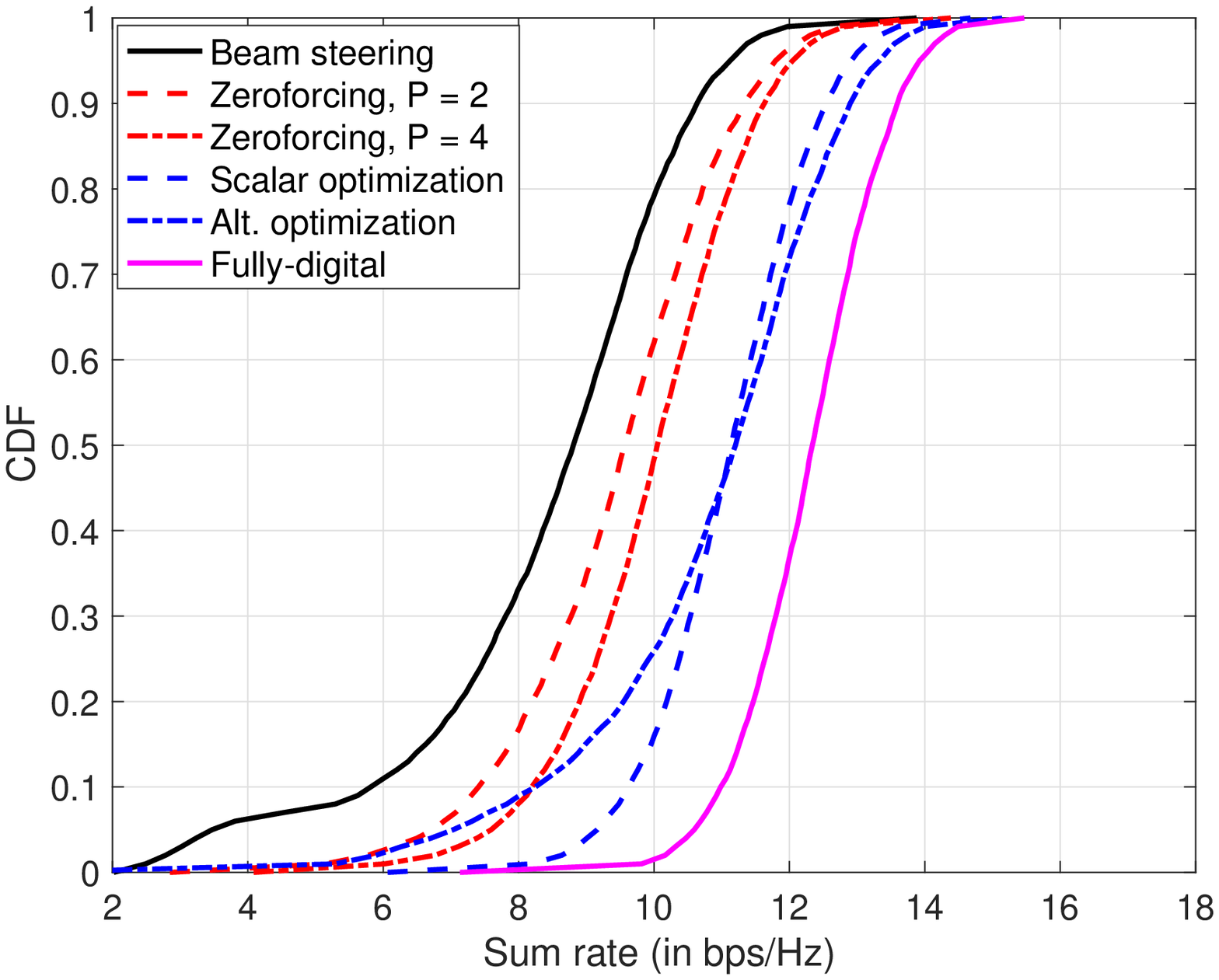}
\\
(a) & (b)  
\end{tabular}
\caption{\label{fig:upperBounds}CDF of sum rates of the multi-user schemes
compared with the two upper bounds using a $M = 16$ codebook with (a) $N = 8$, (b) $N = 32$. 
}
\end{center}
\end{figure*}

\begin{figure*}[htb!]
\begin{center}
\includegraphics[height=2.4in,width=3.0in]{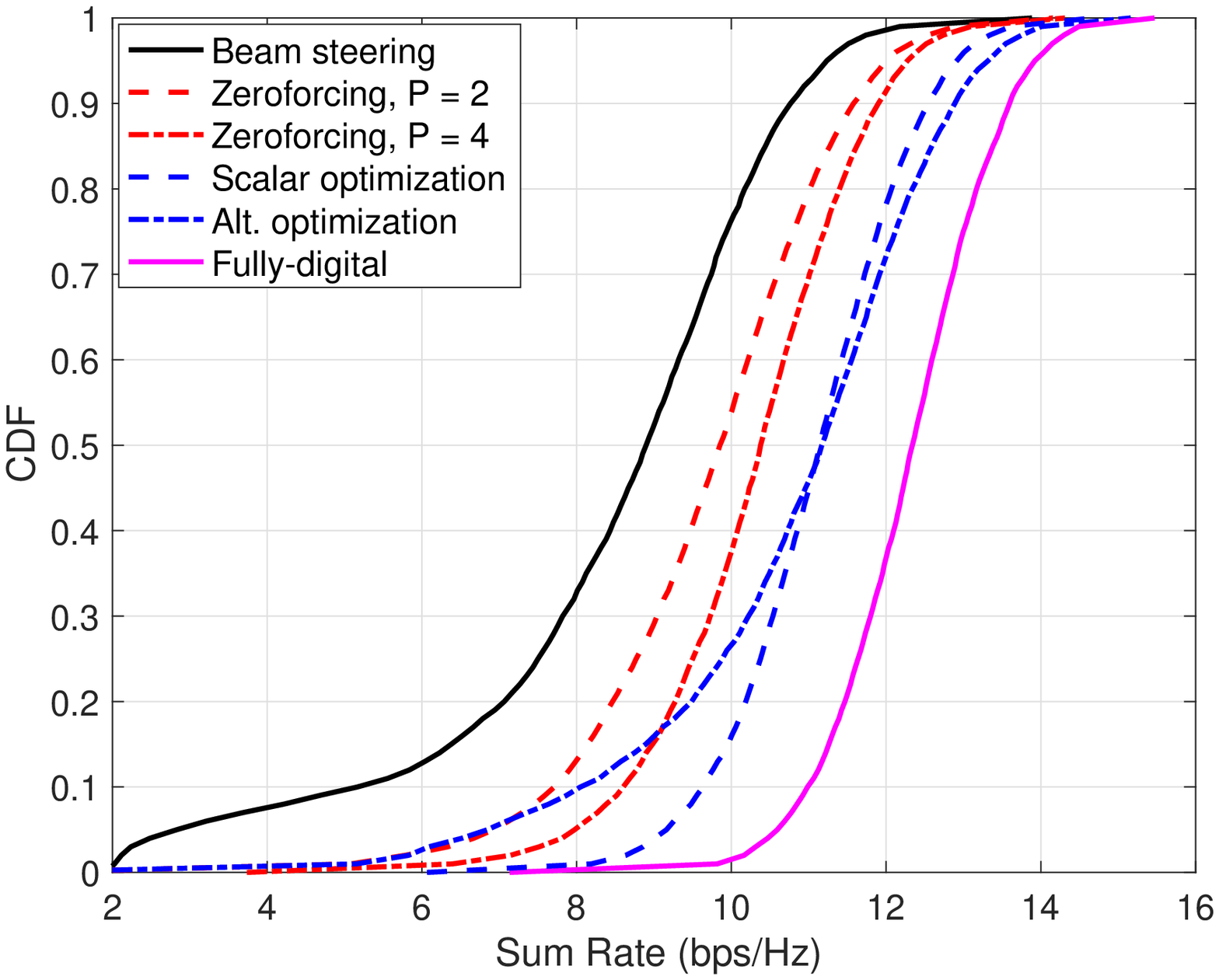} \caption{\label{fig:upperBounds1}
CDF of sum rates of the multi-user schemes
compared with the two upper bounds using a $M = 16$ codebook with $N = 256$.}
\end{center} 
\end{figure*} 

\ignore{
\begin{figure*}[htb!]
\begin{center}
\begin{tabular}{cc}
\includegraphics[height=2.5in,width=3.3in] {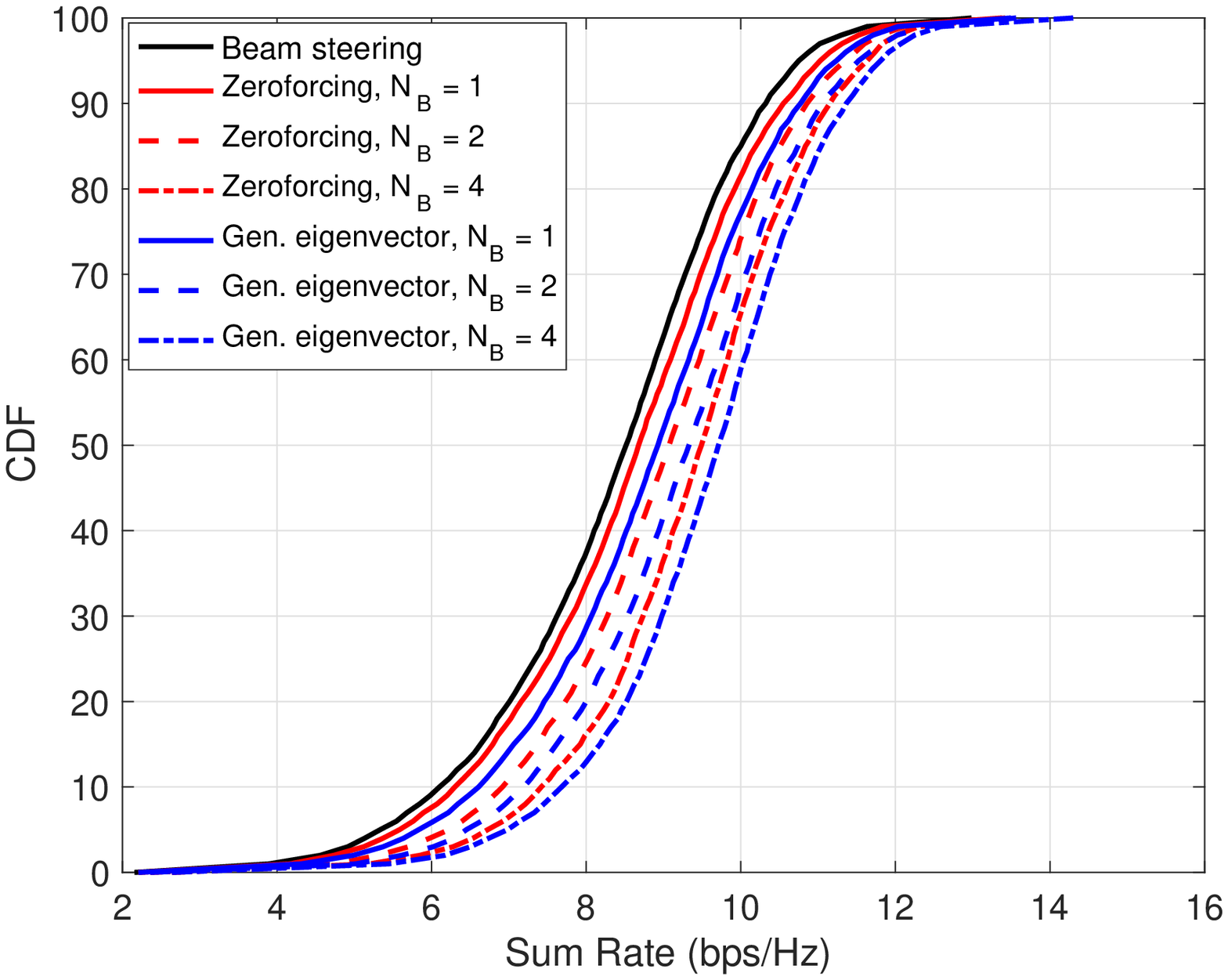}
&
\includegraphics[height=2.5in,width=3.3in] {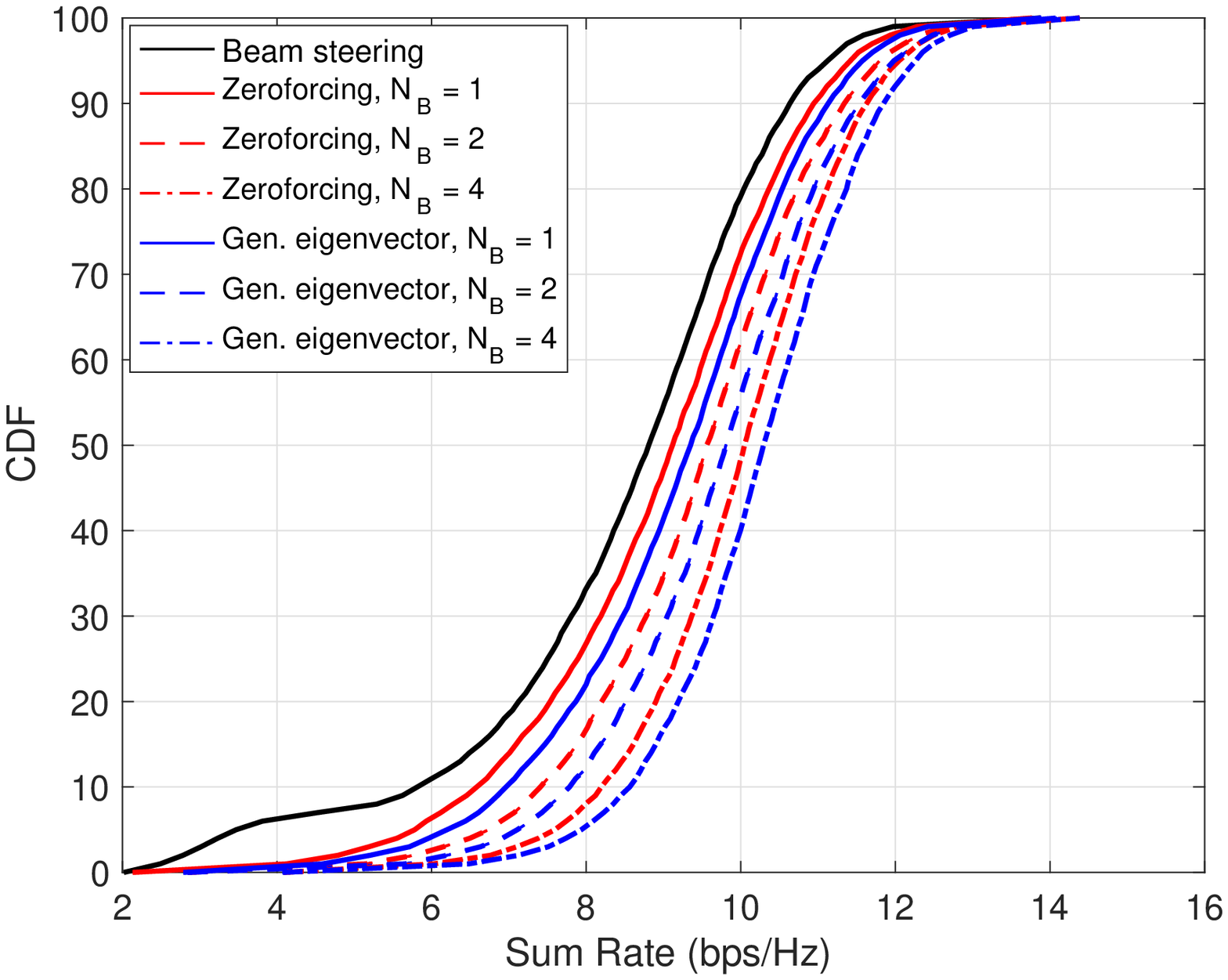}
\\
(a) & (b)
\end{tabular}
\caption{\label{fig:sumrates}
CDF of sum rates with MU-MIMO schemes using $M = 16$ for (a) $N = 16$ and
(b) $N = 32$.}
\end{center}
\end{figure*}
}

\section{Concluding Remarks}
\label{sec6}
The focus of this work has been the development of a feedback mechanism to convey estimates
of certain quantities of interest from an initial beam alignment phase to enable the
base-station to construct an advanced RF precoding structure for multi-user transmissions.
These quantities of interest include the top-$P$ (where $P \geq 1$) base-station side beam
indices, phases and amplitudes of an appropriate received signal estimate, as well as 
the cross-correlation
information of the beams at the user end. This feedback is leveraged to reconstruct/estimate
a rank-$P$ approximation of the channel matrix of interest at the base-station end and generate
a zeroforcing structure for multi-user interference management. Numerical studies show that the
additional feedback overhead is marginal, but the relative performance improvement over a simplistic
beam steering scheme is significant even with a very coarse initial beam alignment codebook.

This study reinforces the importance of the development of low-complexity (in terms of feedback
overhead as well as implementation) yet good (in terms of performance and structure) 
feedback techniques for
large-MIMO systems~\cite{david_review,vasanth_design}. While this work has only scratched the
surface of such techniques, a number of possible future research directions are worth considering.
Benchmarking the performance of any proposed feedback technique with a tight upper bound (for
the sum rate) is an area of fundamental difficulties due to the non-convex nature of the
problem~\cite{cioffi_weighted,kobayashi,vasanth_arxiv2011} and is richly rewarding. Understanding
the fundamental limits of hybrid precoders beyond the phase-only control architecture that is
common in the literature, as well as providing a directional intuition into the structure of the
precoder construction (in contrast to a {\em black box} optimization solution) are of importance
in practical implementations. While the solutions proposed in this work can be 
readily extended to polarization-diversity transmissions, extending it to the case where the
users possess two (or more) RF chains with the base-station communicating over two {\em spatial} 
layers is of
importance from a 3GPP 5G-NR deployment perspective. Study of different hybrid beamforming
architectures such as the sub-connected structure in~\cite{gao_hybrid} and comparison
with the proposed scheme(s) would be of interest. Sensitivity of such advanced schemes to
impairments such as Doppler and 
phase noise are also worth exploring more carefully.

\appendix
\subsection{Generalized Eigenvector Solution}
\label{app_ge}
We need the following statement on the generalized eigenvector solution to the
standard optimization that will be repeatedly considered in this work.
\begin{lem}
\label{lem_ge_soln}
If ${\bf B}$ is an $n \times n$ positive definite matrix, then the principal
square-root (denoted as ${\bf B}^{1/2}$) exists and is invertible (denoted as
${\bf B}^{-1/2}$). Further, if ${\bf A}$ is another $n \times n$ positive
semi-definite matrix, the following optimization over $n \times 1$ unit-norm
vectors is well-understood~\cite{vasanth_arxiv2011,vasanth_jsac2017}
\begin{eqnarray}
{\bf f}_{\sf opt} =
\arg\max_{ {\bf f} \hsppp : \hsppp \| {\bf f}\| = 1}
\frac{ {\bf f}^{\dagger} {\bf A} {\bf f} }{ {\bf f}^{\dagger} {\bf B} {\bf f} }
= \frac{ {\bf B}^{-1/2} \cdot {\sf Dom \hspp eig} \left(
{\bf B}^{-1/2} \hsppp {\bf A} \hsppp {\bf B}^{-1/2} \right)
}
{ \| {\bf B}^{-1/2} \cdot {\sf Dom \hspp eig} \left(
{\bf B}^{-1/2} \hsppp {\bf A} \hsppp {\bf B}^{-1/2} \right)\| }
\label{eq_fopt}
\end{eqnarray}
with ${\sf Dom \hspp eig}(\cdot)$ denoting the dominant eigenvector operation of the
underlying matrix. In the special case where ${\bf A} = {\bf w} {\bf w}^{\dagger}$
is a rank-$1$ matrix for some column vector ${\bf w}$, then ${\bf f}_{\sf opt}$ reduces
to ${\bf f}_{\sf opt} = \frac{ {\bf B}^{-1} {\bf w}} { \| {\bf B}^{-1} {\bf w}\|}$.
\qed
\end{lem}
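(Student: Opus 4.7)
The plan is to handle this in two logical pieces: first establish the existence and invertibility of $\mathbf{B}^{1/2}$, and then reduce the generalized Rayleigh quotient to a standard Rayleigh quotient via a change of variables. Since $\mathbf{B}$ is positive definite and Hermitian, spectral decomposition gives $\mathbf{B} = \mathbf{U}\mathbf{\Lambda}\mathbf{U}^{\dagger}$ with strictly positive eigenvalues $\lambda_i > 0$, so I can define $\mathbf{B}^{1/2} = \mathbf{U}\mathbf{\Lambda}^{1/2}\mathbf{U}^{\dagger}$ and $\mathbf{B}^{-1/2} = \mathbf{U}\mathbf{\Lambda}^{-1/2}\mathbf{U}^{\dagger}$ unambiguously, which takes care of the first claim.

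Next, I would observe that the objective $\mathbf{f}^{\dagger}\mathbf{A}\mathbf{f} / \mathbf{f}^{\dagger}\mathbf{B}\mathbf{f}$ is invariant under scaling of $\mathbf{f}$, so the unit-norm constraint can be dropped during the optimization and reintroduced only at the end by normalizing the optimizer. The key substitution is $\mathbf{g} = \mathbf{B}^{1/2}\mathbf{f}$, equivalently $\mathbf{f} = \mathbf{B}^{-1/2}\mathbf{g}$, under which the generalized Rayleigh quotient becomes the ordinary Rayleigh quotient
\begin{equation}
\frac{\mathbf{g}^{\dagger}\bigl(\mathbf{B}^{-1/2}\mathbf{A}\mathbf{B}^{-1/2}\bigr)\mathbf{g}}{\mathbf{g}^{\dagger}\mathbf{g}}.
\end{equation}
The matrix $\mathbf{B}^{-1/2}\mathbf{A}\mathbf{B}^{-1/2}$ is Hermitian and positive semi-definite, so by the standard Courant--Fischer / Rayleigh--Ritz characterization the maximum is attained at $\mathbf{g}_{\sf opt} = {\sf Dom\,eig}\bigl(\mathbf{B}^{-1/2}\mathbf{A}\mathbf{B}^{-1/2}\bigr)$. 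Pulling this back to $\mathbf{f}$ via $\mathbf{f} = \mathbf{B}^{-1/2}\mathbf{g}$ and then normalizing gives exactly the expression in~(\ref{eq_fopt}).

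For the rank-one specialization $\mathbf{A} = \mathbf{w}\mathbf{w}^{\dagger}$, I would note that $\mathbf{B}^{-1/2}\mathbf{A}\mathbf{B}^{-1/2} = \bigl(\mathbf{B}^{-1/2}\mathbf{w}\bigr)\bigl(\mathbf{B}^{-1/2}\mathbf{w}\bigr)^{\dagger}$ is itself rank-one, so its dominant eigenvector is proportional to $\mathbf{B}^{-1/2}\mathbf{w}$. Substituting this into the general formula yields $\mathbf{f}_{\sf opt} \propto \mathbf{B}^{-1/2}\bigl(\mathbf{B}^{-1/2}\mathbf{w}\bigr) = \mathbf{B}^{-1}\mathbf{w}$, and normalizing gives the stated closed form.

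There is no genuine obstacle here; the only subtlety worth flagging is that the change of variables requires $\mathbf{B}$ to be strictly positive definite (not merely positive semi-definite) in order for $\mathbf{B}^{-1/2}$ to exist, which is why this hypothesis is stated explicitly. Positive semi-definiteness of $\mathbf{A}$, by contrast, is only needed to ensure that the dominant eigenvalue is nonnegative so that the maximization is meaningful; the algebraic identity itself holds for any Hermitian $\mathbf{A}$.
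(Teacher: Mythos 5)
Your proof is correct. Note that the paper itself does not prove this lemma at all --- the statement ends with \qed and the result is deferred to the cited references as ``well-understood'' --- so there is no in-paper argument to compare against. Your route (spectral decomposition of $\mathbf{B}$ to construct $\mathbf{B}^{\pm 1/2}$, the substitution $\mathbf{g} = \mathbf{B}^{1/2}\mathbf{f}$ turning the generalized Rayleigh quotient into an ordinary one, Rayleigh--Ritz for the maximizer, and the rank-one collapse $\mathbf{B}^{-1/2}\bigl(\mathbf{B}^{-1/2}\mathbf{w}\bigr) = \mathbf{B}^{-1}\mathbf{w}$) is exactly the standard argument the references rely on, and your closing remark correctly identifies which hypothesis does what: strict positive definiteness of $\mathbf{B}$ is what makes the change of variables legitimate, while positive semi-definiteness of $\mathbf{A}$ is inessential to the algebra. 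The only point worth adding is that the maximizer is unique (up to phase) only when the dominant eigenvalue of $\mathbf{B}^{-1/2}\mathbf{A}\mathbf{B}^{-1/2}$ is simple; otherwise ${\sf Dom\ eig}(\cdot)$ returns one of several valid choices, which is harmless for the lemma's use in the paper.
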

Note that the generalized eigenvector of a matrix pair $({\bf A}, {\bf B})$ is a vector
${\bf x}$ that solves the problem ${\bf A} {\bf x} = \sigma {\bf B}{\bf x}$ for
some scalar $\sigma$. From this description, it can be seen that ${\bf f}_{\sf opt}$
in~(\ref{eq_fopt}) is the dominant unit-norm generalized eigenvector of the matrix
pair $({\bf A}, {\bf B})$.

\subsection{Proof of Prop.~\ref{prop_zf}}
\label{proof_prop_zf}
Given the expression for $\widehat{\sf SINR}_m$ in~(\ref{eq_estimated_SINR}), the
zeroforcing structure corresponds to the construction $\{ {\bf f}_m \}$ such that
\begin{eqnarray}
\big|{\bf g}_k^{\dagger} \widehat{\bf H}_k \hsppp {\bf f}_m \big|^2 = 0, \hspp m \neq k,
\hspp \{m , k\} \in 1, \cdots, K.
\label{eq_zf_condition}
\end{eqnarray}
An elementary computation shows that by setting ${\bf f}_m, \hsppp m = 1, \cdots, K$ as in
the statement of the proposition, we can ensure the condition in~(\ref{eq_zf_condition}).
\qed

\subsection{Proof of Prop.~\ref{prop_ge}}
\label{proof_prop_ge}
Since ${\bf f}_k^{\dagger} {\bf f}_k = 1$, we can write $\widehat{ {\sf SLNR} }_k$ as
\begin{eqnarray}
\widehat{ {\sf SLNR} }_k = \frac{ \eta_{k,k} \cdot {\bf f}_k^{\dagger}
\cdot \left( \widehat{\bf H}_k^{\dagger} {\bf g}_k {\bf g}_k^{\dagger} \widehat{\bf H}_k
\right) \cdot {\bf f}_k  }
{ {\bf f}_k^{\dagger} \cdot \left( {\bf I}_{ N_{\sf t}}
+ \sum_{m \neq k} \eta_{m,k} \hsppp
\widehat{\bf H}_m^{\dagger} {\bf g}_m {\bf g}_m^{\dagger} \widehat{\bf H}_m
\right) \cdot {\bf f}_k } .
\end{eqnarray}
The optimal structure of ${\bf f}_k$ in the statement of the proposition follows
directly from Lemma~\ref{lem_ge_soln}. 
\qed

\bibliographystyle{IEEEbib}
\bibliography{newrefsx2}

\end{document}